\documentclass{article}[14pt]

\usepackage{fullpage}
\usepackage{appendix}
\usepackage{amsthm,amsmath,amsfonts,amssymb,amsxtra,bookmark,dsfont,bm,xcolor}
\numberwithin{equation}{section}

\usepackage{mathtools}
\usepackage{enumerate}
\usepackage{comment}
\allowdisplaybreaks
\usepackage[auth-lg]{authblk}

\usepackage{dirtytalk}

\numberwithin{equation}{section}

\theoremstyle{plain}
\newtheorem{theorem}{Theorem}[section]
\newtheorem{proposition}[theorem]{Proposition}
\newtheorem{lemma}[theorem]{Lemma}

\theoremstyle{remark}

\def\f{\frac}

\def\n{\noindent}
\def\vs{\vspace{0.5cm}}
\def\be{\begin{equation}}
\def\ee{\end{equation}}

\def\eps{\varepsilon}

\def\1{{\ensuremath {\mathds 1} }}

\def\R {\mathbb{R}}
\def\N {\mathbb{N}}
\def\C {\mathbb{C}}

\newcommand{\tc}{T_{\rm{coll}}}

\title{A mathematical model for the Einstein-Podolsky-Rosen argument}

\author[1]{Riccardo Adami}
\author[2]{Luigi Barletti}
\author[3]{Alessandro Teta}
\affil[1]{Dipartimento di Scienze Matematiche ``G.L. Lagrange'', Politecnico di Torino, \newline
Corso Duca degli Abruzzi 24, 10129 Torino, Italy\newline }
\affil[2]{Dipartimento di Matematica e Informatica ``U. Dini'', Università di Firenze, \newline Viale G.B. Morgagni 67/a, 50134 Firenze, Italy \newline}
\affil[3]{Dipartimento di Matematica ``G. Castelnuovo'', Università di Roma La Sapienza, \newline P.le A. Moro 2, 00185 Roma, Italy\newline}

\date{\today}

\begin{document}

\maketitle

\begin{abstract}
We study a nonrelativistic system made of two quantum particles  constrained to move on a line and a spin located at a fixed point of the line. 
Initially the two particles are in a maximally entangled state and the spin is down. The first particle interacts with the spin while the second particle is free, i.e., it does not interact neither with the first particle nor with the spin. We rigorously prove that there is a correlation between the state of the spin and the state of the second particle. More precisely, we show that, in a suitable scaling limit, if the first particle flips the spin, then the second  particle possesses a definite  momentum in the direction opposite to the spin.
\end{abstract}

\large

\section{Introduction} \label{sec:intro}

\vs
The celebrated paper by Einstein, Podolsky and Rosen (EPR) published in 1935 \cite{EPR}  is surely one of the most influential works on Quantum Mechanics, not only for its deep conceptual meaning,  but also because it has  highlighted peculiar aspects of the theory that during the  years have led to fundamental applications in many fields, such as quantum information theory.

\n
The aim of this paper is to  formulate a simplified but mathematically well-defined version of the model introduced by EPR and to rigorously derive its physical consequences. 

\n
The original EPR work aims to show that Quantum Mechanics is not complete and it is based on the analysis of  a  model with continuous variables describing two particles in dimension one. 
As is well known, the argument was then reformulated by Bohm \cite{BOHM} in terms of spin variables, which is simpler from a mathematical point of view. Such reformulation is the one commonly discussed and analysed in the literature, while the original EPR formulation is  less  known. However, since we are interested in the latter case, for the convenience of the reader  we  summarise the original  argument in Appendix A.

\n
The model we consider consists of a pair of quantum particles, referred to as particle $1$ and particle $2$, and a spin. The Hilbert space of the  states of the system is then
$$\mathcal K=  L^2 (\R) \otimes L^2 (\R) \otimes \C^2\,, $$
so a state of the system at time $t$ is represented by a two-component vector function, namely
$$ \Psi_t (x_1, x_2) \ = \ \left( \begin{array}{c}\Psi_t^u (x_1, x_2) \\ \Psi_t^d (x_1, x_2) \end{array}\right),$$
where $\Psi_t^{u}, \Psi_t^{d} \in L^2(\R) \otimes L^2(\R)$ and the variable $x_j$, $j=1,2$,  denotes the position variable of the $j-$th particle of the system. The upper component $\Psi_t^u (x_1, x_2) $
is associated with the value $\hbar / 2$ of the spin, i.e., with the spin up,  while the lower $ \Psi_t^d (x_1, x_2) $
refers to the value $-\hbar / 2$, i.e., with the spin down. 

\noindent
Particle $2$ is free, while particle $1$ can interact with the spin through a localized potential $\gamma V$, where $\gamma \in \R$ is a coupling constant and $V$ is smooth and rapidly decaying at infinity. 
Furthermore, the interaction can make the spin flip, so that the full interaction reads as $\gamma V {\bf \sigma}_2$, where
$$ {\bf \sigma}_2 = \left( \begin{array}{cc}
    0  & -i  \\
    i &  0
\end{array}\right)
$$
is the second Pauli matrix. The Hamiltonian of the system reads then
\begin{equation} \begin{split}
    \label{h}
    {\mathcal H} \ = & \ {\mathcal H}_0 + 
     \gamma \, V \Big( \frac{x_1 - a} \delta \Big) \, {\bf \sigma}_2 \\
    {\mathcal H}_0 \ = & \  - \frac{\hbar^2}{2m_1 } {\mathbb I} \, \Delta_1 
    - \frac{\hbar^2}{2m_2 } {\mathbb I} \, \Delta_2 + \frac{\hbar \omega}{2}{\bf \sigma}_3 
\end{split}
\end{equation}
where:\\
$m_1$ and $m_2$ are the masses of the particles;\\
 $\omega$ is the characteristic frequency of the spin and therefore the free evolution of the spin is periodic with period $T_s = 2 \pi / \omega$;\\
${\mathbb I}$ is the $2 \times 2$ identity matrix;\\
${\bf \sigma}_3$ is the Pauli matrix
$$ {\bf \sigma}_3 = \left( \begin{array}{cc}
    1  & 0  \\
    0 &  -1
\end{array}\right);
$$
$\gamma$ is the coupling constant of the potential $V$ that describes the interaction between particle $1$ and the spin;\\
 $\delta > 0$ is the spatial scale of the range of $V$;\\
 $a>0$ is the position of the spin, that is fixed.

\noindent
We study the time evolution of the system 
under the assumption that the initial datum has the form
\begin{equation}\label{id}
 \Psi_0 \ = \ 
 \left(\!\! \begin{array}{c} 0 \\ \Psi_0^d  \end{array} \! \! \right),
 \end{equation} 
so that the spin equals $- \hbar / 2$ and the associated energy is $- \hbar \omega / 2$.
Moreover, the pair of particles is in a maximally entangled state concentrated in position around the origin, namely
\begin{equation} \label{psizzeromeno}
    \Psi_0^d (x_1, x_2) \ = \ N ( \phi_P (x_1) \phi_{-P} (x_2) +   \phi_{-P} (x_1) \phi_{P} (x_2) ),
    \end{equation}
where 
$$ 
\phi_P(x)= \frac{1}{\sqrt{\sigma}} f\Big(\frac{x}{\sigma}\Big) \, e^{i \frac{P}{\hbar} x}\,, \;\;\;\;\;\; \sigma>0\,, \;\; P>0\,. 
$$
Here $f$ is smooth, rapidly decaying,  with $f$ even, real and  $\|f\|_{L^2(\R)} =1$. 
The normalization factor $N$ is given by
$$
N= \left( 2 + 2 \int\!\! dy \, f(y)^2 \cos \frac{2 \sigma P}{\hbar} y \right)^{\!\! - 1/2}
$$
We stress that \eqref{psizzeromeno} is a superposition of the state $\phi_P (x_1) \phi_{-P} (x_2)$, where  particle $1$ has mean momentum $P$ and particle $2$  mean momentum $-P$, and the state $\phi_{-P} (x_1) \phi_{P} (x_2)$, where mean momenta are exchanged.

\noindent
In order to let the EPR phenomenon emerge from the dynamics we need to characterize the parameters of the model. More precisely, 
the physical situation we want to describe  is the following. 
Let us define  
\be
T_{coll} := a\, \f{m_1}{P}\,
\ee
i.e.,  the classical collision time of  particle $1$ with momentum $P$ starting from the origin with the spin placed in $a$.  

\n
For  $t < T_{coll}$ we require that, with high probability, the system is described by the free evolution. 
This means that the spin remains down and the particles are described by a superposition  given by the free evolution of  $\phi_P (x_1) \phi_{-P} (x_2)$ and the free evolution  of  $    \phi_{-P} (x_1) \phi_{P} (x_2)$. 

\n
In particular this means that for $t< T_{coll}$ each particle has  positive momentum with probability approximately $1/2$ and negative momentum with probability approximately $1/2$.

\n
 At $t \simeq T_{coll}$ particle $1$   interacts with the spin which, with some probability, can flip to state up. More precisely,  the interaction occurs for the particles described by  the time evolution of the component $\phi_P(x_1) \phi_{-P} (x_2)$ of \eqref{psizzeromeno}. 
 Recall that particle $2$ does not interact with the spin and, moreover, in the free evolution of the component  $\phi_{-P} (x_1) \phi_{P} (x_2)$ particle $1$ remains far from the position of the spin.

 \n
For $t> T_{coll}$ we expect that the system is described by a superposition representing the two possible situations (apart for some small errors): 

\n
$(i)$ the spin is down and the two particles are still described by the free evolution of the initial state \eqref{psizzeromeno},  

\n
$(ii)$ the spin is up and the two particles are described by a product state, with particle $1$ evolving freely with (approximate) mean momentum $P $ and mean position on the right of the spin position $a$, while particle $2$ evolves freely with mean momentum $-P$ and mean position on the left of the origin.

\n
Notice that situation $(ii)$ is the one of interest for the EPR argument, in the sense that if the spin is up then particle $1$ has mean momentum $P$ and therefore particle $2$ has mean momentum $-P$. We shall come back to this point with more details after the precise formulation of our result.

\n
The physical situation described above cannot be realized for arbitrary values of the parameters present in  the model. It should be clear that we need a  semiclassical regime for the two particles,  a small and short range interaction potential and  excitation energy of the spin much smaller than the initial kinetic energy of particle $1$, i.e., a quasi-elastic regime for particle $1$. In order to impose these conditions, it is convenient to introduce the following scaling   (analogous to the one used in  \cite{FigariTeta,RT}):
\begin{equation}
    \label{scaling}
    \hbar = \varepsilon^2, \ \;\;\; \omega = \eps^{-1}, \ \;\;\; \gamma = \eps^2,  \ \;\;\; \delta = \eps \ \;\;\; \sigma = \eps
\end{equation}
where $\eps$ is a small positive parameter, while $P$ and $a$ are of order 1. 
For notational simplicity, we also set $m_1 = m_2=1$.

\n
Let us briefly comment on the  meaning of our scaling for $\eps \to 0$. 
We first  note that  
the quantity
$
\frac{\Delta p}{P}$ is roughly given by $ \frac{\hbar}{P\sigma} =O(\eps),
$
which means that the momentum of the particles is well concentrated around the mean values $P$ or $-P$. Moreover, 
$$
\frac{\sigma}{a} =O(\eps)\,, \;\;\;\;\;\; \frac{\delta}{a} =O(\eps)
$$
i.e. the localization in position of the particles and the effective range of the interaction are much smaller than the distance $a$. We also stress that the spacing of the energy levels of the spin  is $\hbar \omega = O(\eps)$ while the kinetic energy of particle $1$ is $O(1)$ (quasi-elastic regime for particle $1$). 
Moreover, the choice of the coupling constant  $\gamma =\eps^2$ guarantees applicability of perturbative methods.

\n
Furthermore, it is interesting to compare the characteristic times of our system. We have

$$
\tc  = \frac{a}{P} =O(1) 
$$
while the period of the \say{spin motion} is 
$$
T_s= \frac{2\pi}{\omega} =O(\eps)
$$
and the time necessary to particle $1$ to cross the interaction region is 
$$
T_{int}= \frac{\delta}{P} =O(\eps)\,.
$$
Therefore the condition $\tc \gg T_s, T_{int}$ allows us to interpret $\tc$ as the effective collision time in the quantum description.  Moreover, the condition $T_s / T_{int} = O(1)$   proves crucial to have a non trivial transition probability for the spin.

\n
According to the scaling \eqref{scaling}, the Hamiltonian of the system reads

\begin{equation} \begin{split}
    \label{he}
    {\mathcal H}^{\eps} \ = & \ {\mathcal H}_0^{\eps} + 
     \eps^2 \, V^{\eps} \, {\bf \sigma}_2 \\
    {\mathcal H}_0^{\eps}  \ = & \  - \frac{\eps^4}{2 } {\mathbb I} \, \Delta_1 
    - \frac{\eps^4}{2 } {\mathbb I} \, \Delta_2 + \frac{\eps}{2}{\bf \sigma}_3 
\end{split}
\end{equation}
where $V^{\eps}$ denotes the multiplication operator by $V\big(\frac{x_1 - a}{\eps} \big)$. 
The initial state has  the form \eqref{id}, with $\Psi_0^-$ replaced by

\begin{equation} \label{psizzeromenoe}
    \Psi_0^{-,\eps} (x_1, x_2) \ = \ N^{\eps} ( \phi_P^{\eps} (x_1) \phi_{-P}^{\eps} (x_2) +   \phi_{-P}^{\eps} (x_1) \phi_{P}^{\eps} (x_2) ),
    \end{equation}
where 
$$ 
\phi_P^{\eps}(x)= \frac{1}{\sqrt{\eps}} f\Big(\frac{x}{\eps}\Big) \, e^{i \frac{P}{\eps^2} x}
$$
and $N^{\eps}$ is given by
$$
N^{\eps} = \left( 2 + 2 \int\!\! dy \, f(y)^2 \cos \frac{2  P}{\eps} y \right)^{\!\! - 1/2}\!\!.
$$
Notice that $N^\eps = 1/\sqrt{2} + O(\eps^n)$ for all $n$.

\n
With an abuse of notation, from now on we  shall drop the dependence on $\eps$ of the initial state, the Hamiltonians and the corresponding unitary propagators.

\n
In order to formulate the result we define the interacting propagator
\begin{equation}
\mathcal U (t) = e^ {-i \frac{t}{\eps^2} \mathcal H}
\end{equation}
 and the free propagator
\begin{equation}
\mathcal U_0 (t) = e^ {-i \frac{t}{\eps^2} \mathcal H_0}
\end{equation}
 both acting in $\mathcal K$. Note that  the action of $\mathcal U_0(t)$ is explicitly given by 
 
\begin{align} \label{propagatore}
& \mathcal U_0(t) \Psi =  \left(\!\! \begin{array}{c} e^{- i \frac{t}{2 \eps} } U_0(t) \otimes U_0(t) \Psi^u  
\\ e^{ i \frac{t}{2 \eps} } \,\,  U_0(t) \, \otimes U_0(t) \Psi^d \end{array} \! \! \right) \;\;\;\;\;\; \text{for any }\;\;\;\; 
\Psi=  \left(\!\! \begin{array}{c} \Psi^u  
\\  \Psi^d \end{array} \! \! \right) \in \mathcal K\,.
\end{align} 
where $U_0(t)$ is  the free propagator in $L^2(\R)$
\begin{equation}\label{freepro}
  U_0 (t) f (x) \ = \ \frac{1}{\sqrt{2 \pi i \eps^2 t}} \int_\R
e^{i  \frac{ (x - y)^2 }{2 \eps^2 t}} f (y) \, dy\,. 
\end{equation}

 \n
Our main result is summarized in the following theorem.

\begin{theorem}\label{th11}
Assume that $V$ and  $f$ belong to the  Schwartz space $\mathcal S(\R)$, with $f$ real, even and $\|f\|_{L^2(\R)} =1$, and let us fix $t> T_{coll}$. Then for $\eps \to 0$ we have

\begin{equation}
\Psi_t:=  \mathcal U(t) \Psi_0  =   \mathcal U_0(t) \Psi_0  + \eps \,  \mathcal U_0(t) \Psi^{(I)}  +\mathcal R(t)
\end{equation}
where
\begin{align}
& \Psi^{(I)}(x_1,x_2) =  N \left(\!\! \begin{array}{c} A(x_1) \, \phi_{-P}(x_2)  \\ 0  \end{array} \! \! \right), \label{psii}\\
&A(x) =  -  \frac{\sqrt{2 \pi}}{P} \, e^{i \frac{a}{\eps P} \left( 1 + \frac{\eps}{2 P^2} \right) } \,\overline{ \hat{V}(P^{-1}) } \,\frac{1}{\sqrt{\eps}}  \,
f \!\left(\frac{x - a P^{-2} \eps}{\eps} \right)  e^{ \frac{i}{\eps^2} \left( P - \eps P^{-1} \right) x} \,, \label{th2}\\
&\|\mathcal R(t)\|_{\mathcal K}< C \, \eps^2\,\label{th3}
\end{align}
and the positive constant $C$ is independent of $\eps$ and depends on $t$ and on the parameters of the model.

\end{theorem}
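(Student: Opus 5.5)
Our approach is a Duhamel expansion to second order. Write
\begin{equation*}
\mathcal U(t)\Psi_0=\mathcal U_0(t)\Psi_0-\frac{i}{\eps^2}\int_0^t \mathcal U_0(t-s)\,\eps^2 V^\eps\sigma_2\,\mathcal U_0(s)\Psi_0\,ds+\mathcal R_2(t),
\end{equation*}
where $\mathcal R_2(t)=-\frac{1}{\eps^4}\int_0^t\!\!\int_0^s \mathcal U(t-s)\,\eps^2V^\eps\sigma_2\,\mathcal U_0(s-\tau)\,\eps^2V^\eps\sigma_2\,\mathcal U_0(\tau)\Psi_0\,d\tau\, ds$. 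The first-order term equals $-i\,\mathcal U_0(t)\int_0^t \mathcal U_0(-s)V^\eps\sigma_2\mathcal U_0(s)\Psi_0\,ds$. Since $\Psi_0$ has only a down component, $\sigma_2$ moves it to the up component with a factor $-i$. By \eqref{propagatore} the phases combine to $e^{-is/\eps}$. Because $V^\eps$ acts only on $x_1$, the $x_2$ factor stays $U_0(s)\phi_{\pm P}$ and cancels against $U_0(-s)$. This leaves the one-body quantity
\begin{equation*}
-N\int_0^t e^{-is/\eps}\,U_0(-s)V^\eps U_0(s)\phi_{\pm P}\,ds,
\end{equation*}
tensored with $\phi_{\mp P}(x_2)$.

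The second step treats the two branches. For the $\phi_{-P}(x_1)$ branch, the free wave packet moves left and stays at distance $O(1)$ from $a$. Nonstationary phase in $s$ and $x$, using the Schwartz decay of $f$ and $V$, gives a contribution $O(\eps^\infty)$. For the $\phi_P$ branch, we pass to the Fourier side. The kernel of $U_0(-s)V^\eps U_0(s)$ in momentum is
\begin{equation*}
\frac{1}{\sqrt{2\pi}}\,\eps^{-1}\cdot \eps\,\hat V\!\left(\tfrac{p-q}{\eps}\right)e^{-ia(p-q)/\eps^2}\,e^{is(p^2-q^2)/(2\eps^2)},
\end{equation*}
up to normalization, with $\hat\phi_P$ concentrated in $|q-P|\lesssim\eps$. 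Since $t>\tc$, the $s$-integral can be extended to $\mathbb R$ with an $O(\eps^\infty)$ error, as the packet is far from $a$ at $s=0$ and $s=t$. The extended integral is a delta function, which imposes $p^2-q^2=2\eps$, i.e.\ $p\simeq q+\eps/P$. This is the energy transfer $\hbar\omega$ to the spin. It yields the factor $\frac{2\pi\eps^2}{q}\hat V(\cdot)$, evaluated at $(p-q)/\eps\approx 1/P$. Expanding $p(q)=\sqrt{q^2+2\eps}$ to the order needed and returning to position space produces the shifted profile $f((x-aP^{-2}\eps)/\eps)$, the momentum shift $\eps/P$, and the phase $e^{i\frac{a}{\eps P}(1+\frac{\eps}{2P^2})}$. (The sign of the shift and the conjugation of $\hat V$ follow the sign conventions of $\sigma_2$.) Replacing $q$ by $P$ and $\hat V$ at its value at $1/P$ costs one further factor of $\eps$. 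This gives $\eps\,\mathcal U_0(t)\Psi^{(I)}+O(\eps^2)$.

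The main obstacle is the bound $\|\mathcal R_2(t)\|\le C\eps^2$. Unitarity of $\mathcal U$ and the crude estimate $\|V^\eps\|\le C$ give only $O(1)$ after the $\eps^{-4}\cdot\eps^4$ prefactor, because the time integrals are $O(1)$. We must therefore exploit the short interaction time. The plan is to pull $\mathcal U(t-s)$ out by unitarity, so it suffices to bound the $L^2$ norm of the inner double integral, for each fixed $s$, by $\eps^2$ uniformly in $s$. The inner function $\int_0^s V^\eps\,\mathcal U_0(s-\tau)V^\eps\,\mathcal U_0(\tau)\Psi_0\,d\tau$ is nonnegligible only for $s,\tau$ within $O(\eps)$ of $\tc$. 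Localizing with stationary phase and decay estimates, the $\tau$-range contributes $O(\eps)$ and the $s$-range contributes another $O(\eps)$, which gives $\eps^2$. Concretely, we first show that $\|V^\eps U_0(\tau)\phi_P\|\le C_n(1+|\tau-\tc|/\eps)^{-n}$. Second, we show that the first-order up-component $\int_0^\tau$ is, in $L^2$, of the form $\eps^{1/2}$ times a packet moving with velocity $\approx P$, so applying $V^\eps$ again is also localized in $s$ near $\tc$ on the $\eps$ scale. Uniform-in-$\eps$ dispersive estimates on Schwartz packets in the semiclassical scaling make this rigorous, with all residual tails $O(\eps^\infty)$.
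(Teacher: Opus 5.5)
The step that does not work as justified is extending the first-order time integral for the $\phi_P$ branch from $[0,t]$ to $\R$ with an $O(\eps^\infty)$ error ``because the packet is far from $a$ at $s=0$ and $s=t$''. Localization only controls bounded time windows. On $(t,\infty)$ and $(-\infty,0)$, the components of $\phi_P$ with momentum near $0$ are of size $O(\eps^\infty)$ but keep overlapping $V^\eps$ while they disperse. In general $\|V^\eps U_0(s)\phi_P\|$ decays only like $O(\eps^\infty)\,|s|^{-1/2}$ as $|s|\to\infty$, so $\int_t^\infty\|V^\eps U_0(s)\phi_P\|\,ds=\infty$, and no triangle-inequality argument gives your claim. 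The tails are small only because of oscillation, and this has to be proved. One way: use a smooth time cutoff equal to $1$ near $\tc$, localize $q$ near $P$, and integrate by parts repeatedly in the energy variable $\omega=p^2-q^2+2\eps$. The total phase has $\omega$-derivative $(s-a/q)/(2\eps^2)$, which is bounded away from zero off a neighbourhood of $\tc$, while the amplitude varies on the scale $\eps$, so each step gains a factor $\eps$. The paper avoids the issue by ordering the steps differently. It first freezes the amplitude at $\tau=\tc$ on the finite window (error $Q_2=O(\eps^2)$) and only then extends the $z$-integral. The frozen integrand no longer disperses, and integration by parts in $\xi$ gives $Q_1=O(\eps^N)$.

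There is also a sign error that changes the result. The spin phases combine to $e^{+is/\eps}$, not $e^{-is/\eps}$: $\mathcal U_0(s)$ multiplies the down component by $e^{is/2\eps}$, and $\mathcal U_0(-s)$ multiplies the up component by $e^{is/2\eps}$ (this is the factor $e^{i\tau/\eps}$ in the paper's $I(t)$). Hence the energy shell is $p^2=q^2-2\eps$, i.e.\ $p\simeq q-\eps/q$. The particle supplies the energy $\hbar\omega=\eps$ needed to flip the spin up, consistent with your own remark about energy transfer to the spin. Accordingly $\hat V$ is evaluated at $(p-q)/\eps\simeq -1/P$, which equals $\overline{\hat V(1/P)}$ because $V$ is real. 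Your sign would give momentum $P+\eps/P$, position shift $-a\eps/P^2$ and $\hat V(1/P)$, contradicting the formula for $A$; no convention for $\sigma_2$ repairs this.

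On the other hand, your treatment of the second-order remainder is sound and simpler than the paper's. The paper expands to third order, computes the free term $\mathcal J_0$ as an explicit six-fold oscillatory integral, and bounds the third-order term crudely. In your route, for an envelope modulated by $e^{iPx/\eps^2}$, the free flow acts in $y=(x-a)/\eps$ as $e^{i\sigma\partial_y^2/2}$ on the envelope followed by a translation by $P\sigma/\eps$ (up to a phase), with $\sigma\in[0,t]$. This yields
$\|V^\eps U_0(s-\tau)V^\eps U_0(\tau)\phi_P\|\le C_n(1+|\tau-\tc|/\eps)^{-n}(1+(s-\tau)/\eps)^{-n}$,
and hence $O(\eps^2)$ after the two time integrations, with no phase analysis. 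One correction: for fixed $s$ near $\tc$ the inner quantity is of order $\eps$, not $\eps^2$ as your reduction states. The second factor $\eps$ comes from the $s$-integration, as your next sentence correctly says.
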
 

\n
The proof will be given in several steps in the following sections. Here we make some comments on the  result.

\n
The theorem provides precisely the state of the system we expected for $t>T_{coll}$ in a perturbative form.  In particular, the zero-th order term describes the unperturbed situation, with spin down and the two particles described by the free evolution of the initial state \eqref{psizzeromeno}.  The first order term describes the situation with spin up, particle $1$ moving freely towards the right with a slightly reduced momentum   $P- \eps P^{-1}$ and localized in position around  
$$
x_1(t)=  a P^{-2} \eps + (P- \eps P^{-1})\,  t= a + (P- \eps P^{-1})(t- T_{coll})\, $$ while particle $2$ moves freely towards the left with unperturbed momentum $-P$ and is localized in position around  
$$x_2(t)= -P t= -a - P(t-T_{coll})\,.$$ 
Finally, the rest is shown to be $O(\eps^2)$. 

\n
On the basis of  Theorem \ref{th11} and Born's rule we can compute approximate formulas for the probabilities of  outcomes of  measurements on the system, performed for $t>\tc$. In particular, we compute the probability to find the spin up or down, denoted by $\mathcal P_{u}$, $\mathcal P_{d}$ respectively, and the probability to find momentum $-P$ for particle $2$ and the spin up or down, denoted $\mathcal P_{-,u}$, $\mathcal P_{-,d}$ respectively. 
By \eqref{psii}, \eqref{th2}, \eqref{th3} we find

\begin{align}
&\mathcal P_u = \alpha \, \eps^2 + O (\eps^3) \,, \;\;\;\;\;\;\;\;\;\;\;\;\; \mathcal P_d = 1 + O (\eps)\\
&\mathcal P_{-,u}  = \alpha \, \eps^2 +   O(\eps^3)   \,, \;\;\;\;\;\;\;\; \;\;\mathcal P_{-,d}= \frac{1}{2} + O(\eps)
\end{align}

\n
where 

\be
\alpha = \f{\pi}{P^2} \, |\hat{V} (P^{-1})|^2 .
\ee

\n
This means that, for  $t>\tc$, we have 
\begin{align}
&\frac{\mathcal P_{-,u}}{\mathcal P_u} = 1 + O(\eps) \label{proco1}\,,\\
&\frac{\mathcal P_{-,d}}{\mathcal P_d} = \frac{1}{2} + O(\eps)\,. \label{proco2}
\end{align}

\n
We stress that the above statements hold for any $t>\tc$ and therefore regardless of how large  the distance is between the spin and the position of particle $2$.

\n 
Formula \eqref{proco1} can be considered as a way to express the essence of the EPR argument. 

\n
Indeed, following the line of reasoning of EPR, one should apply the principles of reality and locality (see also Appendix A). Now, Formula \eqref{proco1}  {{shows}} that, if we measure the spin and the result is ``up'', then we can predict that, {almost certainly}, particle $2$ has momentum $-P$.  This means that one should attribute an element of reality associated with the momentum of particle 2 (due to the reality criterion). Moreover, as particle 1 and the spin do not interact with particle 2, such element of reality would be {{existing}} prior to the measurement of the spin (due to the locality principle). On the other hand, before the measurement, Quantum
Mechanics does not provide a representation for such an element of reality and therefore
must be considered as incomplete.
Contrarily to the original EPR argument, in this formulation one does not need to invoke the uncertainty principle.

\n
Conversely, if the spin remains ``down'' after the collision time $\tc$, then  formula \eqref{proco2} says that we cannot predict with certainty the momentum of particle $2$.  

\bigskip
\n
In the framework of non-relativistic
Quantum Mechanics, several models have been proposed to describe a 
process in which the momentum of a particle is revealed by the flip of a spin. In
such a situation, the spin acts as the simplest possible measurement apparatus
\cite{ColemanHepp,Bell,Pascazio,FigariTeta}. Here we apply the same approach to the
mathematical study of the EPR argument. This allows us to treat, as in the
EPR paper but in full mathematical rigour and in a dynamical setting, a continuous variable (the momentum). More specifically, we  define
an initial state of the system and compute  its time evolution perturbatively with a quantitative estimate of the error. We stress that, even though the role of the spin is that
of a measurement apparatus, we always 
consider the whole system ``particle 1 + particle 2 + spin", without having  to resort to the wave packet reduction.

\noindent
As a further feature of our method, we remark that
dealing rigorously with the Schrödinger equation for a continuous system, brings out the necessity of a suitable scaling, in order to isolate the phenomenon we aim at exploring.

\vs
\n
For the convenience of the reader we collect here some further notation that shall
be used in the rest of the paper.


\n
- The Hilbert space of the sub-system made by particle $1$ and the spin:

 \be
 \mathcal K_{1,s} = L^2 (\R) \otimes \C^2.
  \ee
  
  \n
  - The free Hamiltonian and the free unitary group in $\mathcal K_{1,s}$:
 
\begin{equation}
\mathcal H_0^{1,s}= - \frac{\eps^4}{2 } {\mathbb I} \, \Delta_1 
   + \frac{\eps}{2}{\bf \sigma}_3 \, , \;\;\;\;\;\; 
\mathcal U_0^{1,s} (t) = e^ {-i \frac{t}{\eps^2} \mathcal H_0^{1,s}} = 
\left(\!\! \begin{array}{cc}
    U_0(t)\, e^{-i \frac{t}{2\eps}}  & 0  \\
    0 &  U_0(t)\, e^{i \frac{t}{2\eps}}
\end{array}\!\!\right) .
\end{equation}

\n
- the multiplication operator $V^{\eps}$:

\be
V^{\eps}(x_1)= V\Big( \frac{x_1-a}{\eps} \Big) \,, \;\;\;\;\;\; x_1 \in \R.
\ee

\n
- the interacting Hamiltonian and the interacting unitary group in $\mathcal K_{1,s}$:

\begin{equation}
\mathcal H^{1,s}= - \frac{\eps^4}{2 } {\mathbb I} \, \Delta_1 
   + \frac{\eps}{2}{\bf \sigma}_3 +  \eps^2 \, V^{\eps} \, {\bf \sigma}_2\, , \;\;\;\;\;\; \mathcal U^{1,s} (t) = e^ {-i \frac{t}{\eps^2} \mathcal H^{1,s}}.
   \end{equation}

\n
- $(f\otimes g)(x_1,x_2)= f(x_1) g(x_2)$ where $g \in L^2(\R)$ and $f \in L^2(\R)$ or $f \in \mathcal K_{1,s}$.

\vs

\n
- $[ (S \otimes T)(f \otimes g)](x_1,x_2) =  (Sf)(x_1) (Tg)(x_2)$ where $T$ is a linear operator in $L^2(\R)$ and $S$ is a linear operator in $L^2(\R)$ or in $\mathcal K_{1,s}$.

\vs
\n
- The action of the free propagator: $\mathcal U_0(t) $, given explicitly in \eqref{propagatore}.

\vs
\n
- The norms:
\begin{align}
&\|f\|^2 = \int\!\!dx\,  |f(x)|^2 \label{n1}
\end{align}
(however, when necessary, we shall specify the space as $\|f\|^2_{L^2 (\R)}$)

\begin{align}
&\|f\|^2_{L^2_n (\R)} = \int\!\!dx\, |x|^n |f(x)|^2 \label{n2} 
\\
&\| f \|_{W^{1,n}_n (\R) } = \sum_{m=0}^n\int\!\!dx\, (1 + |x|^m) \, |f^{(n-m)}(x)| .
\label{n3} 
\end{align}


\section{Strategy of the proof} \label{strategy}

Here we outline the strategy of the proof. 

\n
By linearity, the evolution of the system is given by

\be
\Psi_t  =  \mathcal U (t) \Psi_0  = N\, \mathcal U(t) \left(\!\! \begin{array}{c} 0  \\ \phi_P\otimes \phi_{-P}   \end{array} \! \! \right) 
+ N\, \mathcal U(t) \left(\!\! \begin{array}{c} 0  \\ \phi_{-P}\otimes \phi_{P}   \end{array} \! \! \right) 
\ee
and, taking into account that   particle $2$ is subject to a free evolution, we can write 
\begin{align}\label{psia}
&\Psi_t = N\, \mathcal U^{1,s} (t) \left(\!\! \begin{array}{c} 0  \\ \phi_P  \end{array} \! \! \right) \otimes  U_0(t) \phi_{-P}   +
N\, \mathcal U^{1,s} (t) \left(\!\! \begin{array}{c} 0  \\ \phi_{-P}  \end{array} \! \! \right)   \otimes  U_0(t) \phi_{P} 
\end{align}

\n
The analysis is then reduced to the study of the evolution of the particle $1$ and the spin. Using Duhamel formula, we have

\begin{equation}\label{duham}
\mathcal U^{1,s} (t)  \left(\!\! \begin{array}{c} 0  \\ \phi_{\pm P}  \end{array} \! \! \right)= 
\mathcal U_0^{1,s} (t)  \left(\!\! \begin{array}{c} 0  \\ \phi_{\pm P}  \end{array} \! \! \right) - i \int_0^t \!\! d\tau \, \mathcal U^{1,s} (t-\tau) V^{\eps} \sigma_2 \,\mathcal U_0^{1,s} (\tau) \left(\!\! \begin{array}{c} 0  \\ \phi_{\pm P}  \end{array} \! \! \right)
\end{equation}

\n
\n
We will show that the last term in \eqref{duham} is negligible 
 when the particle $1$ has a negative momentum (recall that the position $a$ of the spin is assumed to be positive). 
%
Therefore  we write the wave function of the system as

\begin{align}\label{psib}
 \Psi_t  &=   N\, \mathcal U^{1,s} (t) \left(\!\! \begin{array}{c} 0  \\ \phi_P  \end{array} \! \! \right)   \otimes  U_0(t) \phi_{-P}   + N \, e^{i \frac{t}{2 \eps}} \, U_0(t) \phi_{-P}  \otimes U_0(t) \phi_P   \left(\!\! \begin{array}{c} 0  \\1  \end{array} \! \! \right) + N\mathcal L(t)
\end{align}
where
\be\label{Lt}
\mathcal L(t)= - i \int_0^t \!\! d\tau \, \mathcal U^{1,s} (t-\tau) V^{\eps} \sigma_2 \,\mathcal U_0^{1,s} (\tau) \left(\!\! \begin{array}{c} 0  \\ \phi_{- P}  \end{array} \! \! \right)
\otimes U_0(t) \phi_P
\ee

\vs

\n
Let us now consider   the first term in the r.h.s. of \eqref{psib}.  Iterating \eqref{duham}, we find

\begin{align}\label{duham1}
&\mathcal U^{1,s} (t)  \left(\!\! \begin{array}{c} 0  \\ \phi_{ P}  \end{array} \! \! \right)= 
\mathcal U_0^{1,s} (t)  \left(\!\! \begin{array}{c} 0  \\ \phi_{ P}  \end{array} \! \! \right) 
-i \int_0^t \!\!d\tau\, \mathcal U_0^{1,s} (t-\tau)  V^{\eps} \sigma_2\, \mathcal U_0^{1,s} (\tau) \left(\!\! \begin{array}{c} 0  \\ \phi_{ P}  \end{array} \! \! \right)
\nonumber\\
& - \int_0^t \!\! d\tau_1 \, \mathcal U^{1,s} (t-\tau) V^{\eps} \sigma_2\, \mathcal U_0^{1,s} (\tau_1) 
\int_0^{\tau_1} \!\!d\tau_2\, \mathcal U_0^{1,s} (-\tau_2) V^{\eps} \sigma_2\, \mathcal U_0^{1,s} (\tau_2) \left(\!\! \begin{array}{c} 0  \\ \phi_{ P}  \end{array} \! \! \right)\nonumber\\
&= e^{i\frac{t}{2 \eps}} \left(\!\! \begin{array}{c} 0  \\ U_0(t) \phi_{ P}  \end{array} \! \! \right)
- e^{-i \frac{t}{2 \eps}}  \left(\!\! \begin{array}{c}  U_0(t) I(t) \phi_{ P}   \\0 \end{array} \! \! \right) 
+ \mathcal J(t) 
\end{align}
where

\begin{align} 
&I(t)=  \int_0^t \!\!d\tau\, e^{i \frac{ \tau}{\eps}} U_0(-\tau) V^{\eps} U_0(\tau) \label{I(t)0} \\ \label{jt}
&\mathcal J(t)= - \int_0^t \!\! d\tau_1 \, \mathcal U^{1,s} (t-\tau_1) V^{\eps} \sigma_2\, \mathcal U_0^{1,s} (\tau_1) 
\int_0^{\tau_1} \!\!d\tau_2\, \mathcal U_0^{1,s} (-\tau_2) V^{\eps} \sigma_2\, \mathcal U_0^{1,s} (\tau_2) \left(\!\! \begin{array}{c} 0  \\ \phi_{ P}  \end{array} \! \! \right).
\end{align}

\n
Taking into account \eqref{psib} and \eqref{duham1}, the wave function of the system reads

\begin{align}\label{psib1}
&\Psi_t= \mathcal U_0(t) \Psi_0  - N e^{-i \frac{t}{2 \eps}}  \left(\!\! \begin{array}{c}  U_0(t) I(t) \phi_{ P}   \\0 \end{array} \! \! \right) \otimes  U_0(t) \phi_{-P} 
+N  \mathcal J(t) \otimes  U_0(t) \phi_{-P} + N \mathcal L(t)\,.
\end{align}

\n
Let us consider  the operator $I(t)$. Its action   on the function   

\be\label{fxk}
f_{X,K} (x) = \frac{1}{\sqrt{\eps}} \,f\Big( \frac{x}{\eps} -X \Big) \, e^{i \frac{K}{\eps ^2} x} \,, \;\;\;\;X, K >0\,, \;\;\;\; f\in \mathcal S(\R)\,
\ee
can be explicitly computed  and we find (see  Appendix B) 
\begin{align}\label{I(t)}
I(t)f_{X,K} (x) = \frac{e^{i \frac{K}{\eps^2} x} }{\sqrt{2\pi \,\eps} } \int_0^t \!\!d\tau\!\! \int \!\!d\xi\, F(\tau,\xi) \, e^{ \frac{i}{\eps} \Phi(\tau,\xi)}
\end{align}
where
\begin{align} \label{effeffi}
&F(\tau,\xi)= \widehat{V}(\xi) \, f\big( \tau \xi + \frac{x}{\eps} -X \big) \,e^{\frac{i}{2} \tau \xi^2 + \frac{i}{\eps} x \xi }\,,\\
&\Phi(\tau,\xi) = \tau -a \,\xi + K \tau \, \xi\,. 
\end{align}
For notational convenience, we have dropped the dependence on $x, \eps$ of the function $F$. Notice that \eqref{I(t)} has the form of a highly oscillating integral for $\eps \to 0$ and therefore its asymptotic behaviour is characterized by the critical points of the phase

\be
(\tau,\xi)= (\tau_c, \xi_c)= \Big( \frac{a}{K} , -\frac{1}{K} \Big) .
\ee
Notice that $\tau_c$ reduces to $T_{coll}$ when $K = P$.

\n
In order to identify the leading order in the asymptotic expansion for $\eps \to 0$ we write

\be
\Phi(\tau,\xi)= \frac{a}{K} + K (\tau-\tau_c)(\xi-\xi_c)\,,
\ee 
we define the new integration variable $z=\frac{K}{\eps} (\tau-\tau_c)$ and we denote $\Omega_{\eps}= \big[ -\frac{a}{\eps} , \frac{K}{\eps} (t-\tau_c) \big]$. Then we find
\begin{align}\label{xxx}
&I(t)f_{X,K} (x) = \frac{e^{i \frac{K}{\eps^2} x  + \frac{i}{\eps} \frac{a}{K}  } }{\sqrt{2\pi } \,K} \,  \sqrt{\eps} \int_{\Omega_{\eps}} \!\!\!\!dz\,  e^{-i \xi_c z} \!\!\int\!\!d\xi\, e^{i z \xi} F\big(\tau_c + \frac{\eps}{K} z, \xi\big) \nonumber\\
&= \frac{e^{i \frac{K}{\eps^2} x  + \frac{i}{\eps} \frac{a}{K}  } }{\sqrt{2\pi } \,K} \,  \sqrt{\eps} \int\!\!dz\,  e^{-i \xi_c z} \!\!\int\!\!d\xi\, e^{i z \xi} F\big(\tau_c , \xi\big)\nonumber\\
& - \frac{e^{i \frac{K}{\eps^2} x  + \frac{i}{\eps} \frac{a}{K}  } }{\sqrt{2\pi } \,K} \,  \sqrt{\eps} \int_{\R \setminus \Omega_{\eps}} \!\!\!\!dz\,  e^{-i \xi_c z} \!\!\int\!\!d\xi\, e^{i z \xi} F\big(\tau_c , \xi\big)\nonumber\\
&+ \frac{e^{i \frac{K}{\eps^2} x  + \frac{i}{\eps} \frac{a}{K}  } }{\sqrt{2\pi } \,K} \,  \sqrt{\eps} \int_{\Omega_{\eps}} \!\!\!\!dz\,  e^{-i \xi_c z} \!\!\int\!\!d\xi\, e^{i z \xi} \Big(F\big(\tau_c + \frac{\eps}{K} z, \xi\big)-  F\big(\tau_c , \xi\big) \Big)\nonumber\\
&= e^{i \frac{K}{\eps^2} x  + \frac{i}{\eps} \frac{a}{K}  }  \, \frac{\sqrt{2 \pi}}{K} \,  \sqrt{\eps} \, F(\tau_c, \xi_c) + Q_1 (x)+ Q_2(x)\,
\end{align}
where
\begin{align}
&Q_1(t,x)=  - \frac{e^{i \frac{K}{\eps^2} x  + \frac{i}{\eps} \frac{a}{K}  } }{\sqrt{2\pi } \,K} \,  \sqrt{\eps} \int_{\R \setminus \Omega_{\eps}} \!\!\!\!dz\,  e^{-i \xi_c z} \!\!\int\!\!d\xi\, e^{i z \xi} F\big(\tau_c , \xi\big) \\
&Q_2(t,x)=  \frac{e^{i \frac{K}{\eps^2} x  + \frac{i}{\eps} \frac{a}{K}  } }{\sqrt{2\pi } \,K} \,  \sqrt{\eps} \int_{\Omega_{\eps}} \!\!\!\!dz\,  e^{-i \xi_c z} \!\!\int\!\!d\xi\, e^{i z \xi} \Big(F\big(\tau_c + \frac{\eps}{K} z, \xi\big)-  F\big(\tau_c , \xi\big) \Big).
\end{align}

\n
Using  formula \eqref{xxx}, with $X=0$ and $K=P$ we find

\begin{align}\label{ifip}
&I(t) \Phi_P (x)= - \eps \,A(x) + Q_1^0 (t,x) + Q_2^0 (t,x)
\end{align}
where $A$ is defined in \eqref{th2} and $Q_i^0 = Q_i $ for $X=0$ and $K=P$, $i=1,2$. Replacing \eqref{ifip} in \eqref{duham1} we obtain

\begin{align}\label{duham2}
&\mathcal U^{1,s} (t)  \left(\!\! \begin{array}{c} 0  \\ \phi_{ P}  \end{array} \! \! \right)= 
  \left(\!\! \begin{array}{c}\eps  \,e^{-i \frac{t}{2 \eps}} \, U_0(t)A   \\e^{i\frac{t}{2 \eps}}\, U_0(t) \phi_{ P}  \end{array} \! \! \right)
- e^{-i \frac{t}{2 \eps}}  \left(\!\! \begin{array}{c}  U_0(t)(Q_1^{0} (t) +Q_2^{0} (t) )   \\0 \end{array} \! \! \right) 
+ \mathcal J(t) 
\end{align}

\n
In conclusion, using formula \eqref{duham2} in \eqref{psib}, we obtain the following expression for the wave function of the system

\begin{align}\label{psic}
 \Psi_t  &=   N \left(\!\!  \begin{array}{c}\eps \, e^{-i \frac{t}{2 \eps}} \, U_0(t)A   \\e^{i\frac{t}{2 \eps}}\, U_0(t) \phi_{ P}  \end{array} \! \! \right)  \otimes  U_0(t) \phi_{-P}   + N \, e^{i \frac{t}{2 \eps}} \, U_0(t) \phi_{-P}  \otimes U_0(t) \phi_P   \left(\!\! \begin{array}{c} 0  \\1  \end{array} \! \! \right) + N \mathcal R(t)\nonumber\\
 &=\left(\!\! \begin{array}{c}\eps \, N \,e^{-i \frac{t}{2 \eps}} \, U_0(t)A  \otimes U_0(t) \phi_{-P}  \\N\, e^{i\frac{t}{2 \eps}}\, \big[ U_0(t) \phi_{ P} \otimes U_0(t) \phi_{-P} + U_0(t) \phi_{ -P} \otimes U_0(t) \phi_{P} \big]  \end{array} \! \! \right) + N \mathcal R(t)\nonumber\\
 &\nonumber\\
 &\equiv \, \mathcal U_0(t) \Psi_0  + \eps \,  \mathcal U_0(t) \Psi^{(I)}  +N \mathcal R(t)
\end{align}
where $\Psi^{(I)}$ is given in \eqref{psii} and 

\begin{align}\label{rest} 
&\mathcal R(t) =  - e^{-i \frac{t}{2 \eps}}  \left(\!\! \begin{array}{c}  U_0(t)(Q_1^0 (t) +Q_2^0 (t))   \\0 \end{array} \! \! \right) \otimes U_0(t) \phi_{-P} + \mathcal J(t) \otimes U_0(t) \phi_{-P} + \mathcal L(t)
\end{align}

\n
The proof of our result is therefore reduced to the estimate of the norm in $\mathcal K$ of the three terms in the r.h.s. of \eqref{rest}. These estimates will be given in the next sections.

\vs

%
%
%
%

\section{Estimate of $\mathcal R(t)$}\


\subsection{Estimate of $Q_1$ and $Q_2$}

\begin{lemma}
    \label{derivate}
    For the function $F$ defined in \eqref{effeffi} the following identities hold
    \begin{eqnarray}
        \partial^N_\xi F ( \tau, \xi) & = & \sum_{\stackrel{0 \leq k,l,m \leq N} {k+l+m = N}} C_{k,l,m}
  \tau^l \widehat V^{(k)} (\xi)
  f^{(l)} \left( \tau \xi + \frac x \eps - X \right) P_m \left( \tau \xi + \frac x \eps \right) e^{i \Xi (\tau, \xi)} \nonumber \\ \label{derxinf}
  \end{eqnarray}
  and
  \begin{eqnarray} \nonumber
     & & \partial_\varphi \partial^N_\xi F ( \tau_c + \varphi, \xi) \\ & = & \sum_{\stackrel{0 \leq k,l,m \leq N}{k+l+m = N}} 
     C_{k,l,m}
     \left[ l
  (\tau_c + \varphi)^{l -1}\widehat V^{(k)} (\xi)
  f^{(l)} \left(( \tau_c + \varphi) \xi + \frac x \eps - X \right) \nonumber 
   P_m \left( (\tau_c + \varphi) \xi + \frac x \eps \right) \right. \\ &&
  + ( \tau_c + \varphi)^l \xi \widehat V^{(k)}(\xi) 
   f^{(l+1)} \left(( \tau_c + \varphi) \xi + \frac x \eps - X \right) 
   P_m \left( (\tau_c + \varphi) \xi + \frac x \eps \right) \nonumber \\
   & & 
  + ( \tau_c + \varphi)^l \xi \widehat V^{(k)}(\xi) 
   f^{(l)} \left(( \tau_c + \varphi) \xi + \frac x \eps - X \right) 
   P_m' \left( (\tau_c + \varphi) \xi + \frac x \eps \right) \nonumber \\
 &&  \left.
  + \frac i 2 ( \tau_c + \varphi)^l \xi^2 \widehat V^{(k)}(\xi) 
   f^{(l)} \left(( \tau_c + \varphi) \xi + \frac x \eps - X \right) 
   P_m \left( (\tau_c + \varphi) \xi + \frac x \eps \right)  \right] e^{i \Xi (\tau_c + \varphi, \xi)} \nonumber \\ \label{derfi}
  \end{eqnarray}
  where $C_{k,l,m} = \frac{N!}{k! l! m!}$, $P_m$ is a complex polynomial of degree $m$ and $\Xi (\tau, \xi) := \frac 1 2  \tau \xi^2 + \frac 1 \eps x \xi $.
\end{lemma}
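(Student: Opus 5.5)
The plan is to prove \eqref{derxinf} by induction on $N$, with an explicit description of the polynomials $P_m$, and then to obtain \eqref{derfi} by differentiating \eqref{derxinf} in $\tau$ and evaluating at $\tau=\tau_c+\varphi$.

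\textbf{Factoring the exponential.} First note that $\partial_\xi \Xi(\tau,\xi) = \tau\xi + \frac{x}{\eps}$. Hence for any function $G$,
\[
\partial_\xi\big(G\,e^{i\Xi}\big) = \Big(\partial_\xi G + i\big(\tau\xi+\tfrac{x}{\eps}\big)G\Big)e^{i\Xi}.
\]
Set $u=\tau\xi+\frac{x}{\eps}$, so that $\partial_\xi u=\tau$. Then
\[
\partial_\xi^N e^{i\Xi} = Q_N(u,\tau)\,e^{i\Xi},
\]
where $Q_N$ satisfies the recursion $Q_0=1$ and $Q_{N+1}=\tau\,\partial_uQ_N + i u\, Q_N$. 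Thus $Q_N$ is a polynomial of degree $N$ in $u$ whose coefficients depend polynomially on $\tau$; it is a Hermite-type polynomial.

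\textbf{Applying Leibniz.} Write $F=\widehat V(\xi)\, f(u-X)\, e^{i\Xi}$. The trinomial Leibniz rule for three factors gives
\[
\partial^N_\xi F = \sum_{k+l+m=N}\frac{N!}{k!\,l!\,m!}\,\widehat V^{(k)}(\xi)\,\tau^l f^{(l)}(u-X)\,Q_m(u,\tau)\,e^{i\Xi}.
\]
Here the chain rule produces the factor $\tau^l$ from $\partial_\xi^l f(u-X)$. This is exactly \eqref{derxinf} with $P_m:=Q_m(\cdot,\tau)$, a complex polynomial of degree $m$ in $\tau\xi+\frac{x}{\eps}$. Its coefficients depend on $\tau$, which is harmless since the statement only asserts $P_m$ is a polynomial of degree $m$.

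\textbf{Differentiating in $\varphi$.} Next I substitute $\tau=\tau_c+\varphi$ and differentiate in $\varphi$ term by term via the product rule. The contributions are as follows.
\begin{itemize}
\item The factor $(\tau_c+\varphi)^l$ gives $l(\tau_c+\varphi)^{l-1}$.
\item The factor $f^{(l)}\big((\tau_c+\varphi)\xi+\frac{x}{\eps}-X\big)$ gives $\xi f^{(l+1)}$.
\item The factor $P_m\big((\tau_c+\varphi)\xi+\frac{x}{\eps}\big)$ gives $\xi P_m'$, differentiating through its argument.
\item The exponential gives $\partial_\varphi\Xi=\frac12\xi^2$, hence the term $\frac{i}{2}\xi^2$.
\end{itemize}
These are precisely the four terms in \eqref{derfi}.

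\textbf{Main obstacle.} The only subtle point is bookkeeping: $P_m$ also depends on $\tau$ through its coefficients, so $\partial_\varphi$ produces an extra term $(\partial_\tau Q_m)(u,\tau)$. I would absorb it by redefining the polynomial entering the $P_m'$ slot. Concretely, $\xi P_m'(u)+\partial_\tau Q_m$ is again a polynomial in $u$ whose coefficients are smooth in $\tau$ and $\xi$, of degree at most $m-1$ in $u$ (times $\xi$). So the structure of \eqref{derfi} is preserved, which is all the later estimates use: polynomial growth in $u$ and in $\xi$, bounded coefficients for $\tau$ in compact sets.

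Alternatively, the lemma's notation "$P_m$" is meant as generic polynomials of degree $m$ with bounded ($\tau$-dependent) coefficients, and I would state this convention explicitly. In that reading the identities hold verbatim after the Leibniz computations above.
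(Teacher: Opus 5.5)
Your route is the paper's: the trinomial Leibniz rule on $\widehat V(\xi)\,f(u-X)\,e^{i\Xi}$ with $u=\tau\xi+\frac x\eps$, a polynomial expression for $\partial_\xi^m e^{i\Xi}$, and then differentiation of \eqref{derxinf} in $\varphi$. The only difference is that you generate the polynomials by the recursion $Q_{m+1}=\tau\partial_uQ_m+iuQ_m$. The paper instead quotes from Gradshteyn--Ryzhik the closed form $P_m(\zeta)=\sum_{m_1+2m_2=m}\frac{m!}{m_1!\,m_2!}\,i^{m_1+m_2}\,2^{-m_2}\tau^{m_2}\zeta^{m_1}$, which is exactly the solution of your recursion.

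The issue you flag as the ``main obstacle'' is real, and the paper's one-line derivation of \eqref{derfi} (``replace $\tau$ by $\tau_c+\varphi$ and differentiate'') misses it. Because of the factors $\tau^{m_2}$, $\partial_\varphi$ also produces the term
$(\tau_c+\varphi)^l\,\widehat V^{(k)}(\xi)\,f^{(l)}\big((\tau_c+\varphi)\xi+\frac x\eps-X\big)\,(\partial_\tau P_m)\big((\tau_c+\varphi)\xi+\frac x\eps\big)\,e^{i\Xi}$, so \eqref{derfi} as printed is incomplete for $N\ge 2$.

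Your patch needs one correction. This term fits none of the four printed slots, since each of them carries a factor $l$, $\xi$ or $\xi^2$, while $\partial_\tau P_m$ is not divisible by $\xi$: already $P_2=i\tau-u^2$ gives $\partial_\tau P_2=i$. So neither absorbing it into the $\xi P_m'$ slot nor the generic-polynomial reading makes \eqref{derfi} hold verbatim.

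The correct statement is \eqref{derfi} plus this fifth term. Here $\partial_\tau P_m$ has degree at most $m-2$ in $u$, with coefficients bounded for $\tau\in[0,t]$. In the estimate of $Q_2$ it is bounded exactly like the other four terms, since it has no extra growth in $\xi$. So your conclusion that the downstream bound $\|Q_2\|\le C\eps^2$ is unaffected is correct.
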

\begin{proof}
The result can be proven by direct computation. By applying Leibniz's formula to the expression for $Q_1$ introduced in \eqref{xxx} one gets
\begin{equation*} \begin{split}
      \partial^N_\xi F ( \tau, \xi) \ = \ & \sum_{\stackrel{0 \leq k,l,m \leq N} {k+l+m = N}} C_{k,l,m} (\partial_\xi^k \widehat V (\xi) ) 
      (\partial_\xi^l f \left(
      \tau \xi + \frac x \eps - X \right) )
      (\partial_\xi^m e^{i \Xi (\tau, \xi)}) \\
       = \ & \sum_{\stackrel{0 \leq k,l,m \leq N} {k+l+m = N}} C_{k,l,m}
       \tau^l
       \widehat V^{(k)}(\xi) 
      f^{(l)} \left(
      \tau \xi + \frac x \eps - X \right) 
      (\partial_\xi^m e^{i \Xi (\tau, \xi)}), \
      \end{split}
\end{equation*}
where $C_{k,l,m} = \frac{n!}{k! l! m!},$
as results from iterating Newton's binomial
formula.
Using identity 0.430-2 in \cite{GR} on 
iterate derivatives of composed
functions, and since
$\partial_\xi \Xi(\tau,\xi)=\tau \xi + x / \eps$, $\partial_\xi^2 \Xi (\tau, \xi) = \tau$ and $\partial_\xi^3 \Xi (\tau, \xi) =0$, one has
\begin{equation}
    \partial_\xi^m e^{i \Xi (\tau,\xi)}
    \ = \ \sum_{m_1+2m_2=m} \frac{m!}{m_1! m_2!} i^{m_1+m_2}\frac{\tau^{m_2}}{2^{m_2}} \left( \tau \xi + \frac x \eps
    \right)^{m_1} e^{i \Xi (\tau, \xi)},
\end{equation}
Denoting 
\begin{equation}\label{pm}
 P_m (\zeta) \ := \    \sum_{m_1+2m_2=m} \frac{m!}{m_1! m_2!} i^{m_1+m_2}\frac{\tau^{m_2}}{2^{m_2}} \zeta^{m_1}
\end{equation}
 one obtains \eqref{derxinf}. Identity
\eqref{derfi} is obtained by replacing $\tau$ by $\tau_c + \varphi$
in
\eqref{derxinf} and differentiating in the variable $\varphi$.
\end{proof}
We can now state the main result of the section.
\begin{theorem}
    [Estimate of $Q_1$ and $Q_2$]
    For the quantities $Q_1$ and $Q_2$ defined in \eqref{xxx} the following estimates hold:
    \begin{equation} \begin{split}
    \label{estq1q2}
\| Q_1 (\cdot, t) \| \ \leq & \ C \eps^N, \quad \forall N \in \mathbb N, \\
\| Q_2 (\cdot, t) \| \ \leq & \ C \eps^2 ,
    \end{split}
    \end{equation}
where the constants $C$ are independent of $\eps$.
\end{theorem}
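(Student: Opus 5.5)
Both estimates rest on one device. The inner integral $G_\tau(z):=\int d\xi\, e^{iz\xi}F(\tau,\xi)$ is the inverse Fourier transform of $F(\tau,\cdot)$, so decay in $z$ comes from integrating by parts in $\xi$. Each $\partial_\xi$ gains a factor $z^{-1}$, and Lemma \ref{derivate} tells us exactly what the derivatives look like.

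The one danger is the polynomial $P_m(\tau\xi+x/\eps)$, which contains the large quantity $x/\eps$. To handle it, I would rewrite $x/\eps=(\tau\xi+x/\eps-X)-\tau\xi+X$. Then every power $(x/\eps)^j$ is absorbed by the Schwartz factor $f^{(l)}(\tau\xi+x/\eps-X)$, at the cost of powers of $\xi$. Those are in turn absorbed by $\widehat V^{(k)}(\xi)\in\mathcal S$. This gives a pointwise bound
\[
|\partial_\xi^N F(\tau,\xi)|\le C_N\,(1+|\xi|)^{-M}\,\bigl(1+|\tau\xi+x/\eps-X|\bigr)^{-M},
\]
uniformly for $\tau$ in a compact set.

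\textbf{Estimate of $Q_1$.} Here $\tau=\tau_c$ is fixed and $\R\setminus\Omega_\eps=\{z<-a/\eps\}\cup\{z>K(t-\tau_c)/\eps\}$. This set lies at distance $\gtrsim 1/\eps$ from the origin, because $t>T_{coll}=\tau_c$ when $K=P$. Integrating by parts $N+2$ times gives $|G_{\tau_c}(z)|\le |z|^{-N-2}\int|\partial_\xi^{N+2}F(\tau_c,\xi)|\,d\xi$.

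Next I take the $L^2_x$ norm of the $\xi$-integral using Minkowski's inequality:
\[
\Bigl\|\int d\xi\,|\partial^{N+2}_\xi F|\Bigr\|_{L^2_x}\le \int d\xi\,(1+|\xi|)^{-M}\,\bigl\|(1+|\tau_c\xi+\cdot/\eps-X|)^{-M}\bigr\|_{L^2_x}\le C\sqrt\eps .
\]
Integrating $|z|^{-N-2}$ over $|z|\gtrsim 1/\eps$ gives $\eps^{N+1}$. Including the prefactor $\sqrt\eps$, I get $\|Q_1\|\le C\eps^{N+2}$, which holds for every $N$.

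\textbf{Estimate of $Q_2$.} I write
\[
F(\tau_c+\tfrac{\eps}{K}z,\xi)-F(\tau_c,\xi)=\tfrac{\eps}{K}\,z\int_0^1 ds\,(\partial_\varphi F)\bigl(\tau_c+s\tfrac{\eps}{K}z,\xi\bigr).
\]
This produces the explicit factor $\eps$. Then I integrate by parts in $\xi$ three times, using the formula \eqref{derfi} for $\partial_\varphi\partial_\xi^N F$, to obtain the decay $|z|^{-3}$. The bound on the $z$-integrand becomes $\eps|z|\cdot|z|^{-3}$ for $|z|\ge1$, and $\eps$ times a bounded quantity for $|z|\le1$; this is integrable over $\R$.

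The variable $\tau=\tau_c+s\eps z/K$ stays in the compact interval $[0,t]$ on $\Omega_\eps$. Hence the Schwartz-type bounds above hold uniformly, and the extra factors $\xi$, $\xi^2$ and $P_m'$ appearing in \eqref{derfi} are absorbed exactly as before. Applying Minkowski in $x$ again gives a factor $\sqrt\eps$. The total is $\sqrt\eps\cdot\eps\cdot\sqrt\eps=\eps^2$.

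\textbf{Main obstacle.} The delicate step is making the pointwise bound uniform in $x$ with the correct $L^2_x$ scaling. After the substitution $x/\eps\mapsto y$, each $\|\cdot\|_{L^2_x}$ contributes exactly $\sqrt\eps$, and the large parameter $x/\eps$ inside $P_m$ must be traded against the decay of $f^{(l)}$ rather than left as a loss. A second point to check is that $\tau=\tau_c+\eps z/K$ may approach $0$, at $z=-a/\eps$. This causes no harm: the bounds only require $\tau$ to lie in a bounded set, and $\tau^l$ stays bounded there.
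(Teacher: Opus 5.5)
Your proposal is correct and follows the paper's argument: you integrate by parts in $\xi$ to get decay in $z$ (a tail of order $\eps^{N}$ for $Q_1$, an integrable weight for $Q_2$), use the fundamental theorem of calculus in $\tau$ to extract the factor $\eps z/K$ for $Q_2$, and rescale $x=\eps y$ to get a $\sqrt{\eps}$ from each $L^2_x$ norm. The differences are cosmetic. You use Minkowski's inequality where the paper expands $\|Q_1\|^2$ and applies Cauchy--Schwarz, and you split the $z$-integral at $|z|=1$ where the paper uses the weight $(1+z^4)^{-1}(1+\partial_\xi^4)$. Your detour through powers of $\xi$ is also unnecessary: the argument of $P_m$ differs from that of $f^{(l)}$ only by the constant $X$, so $P_m$ is absorbed directly into $\|f^{(l)}(\cdot-X)P_m\|$.
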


\begin{proof}
By Fubini's theorem, integrating $N$ times by parts in the variable $\xi$ yields
\begin{equation} \label{q1norm}
    \begin{split}
        \| Q_1 (t, \cdot) \|^2 \ = \ & \frac{\eps}{2 \pi K^2} \int_{\R \backslash \Omega_\eps} dz \, \frac{e^{-i \xi_c z}}{z^N}\int_{\R \backslash \Omega_\eps} dz' \, \frac{e^{i \xi_c z'}}{(z')^N} \int d\xi \, e^{iz \xi}
        \int d\xi' \, e^{-iz \xi'} \\ & 
        \int dx \, \partial_\xi^N F (\tau_c, \xi) \overline{\partial_{\xi'}^N
        F (\tau_c, \xi')}
    \end{split}
\end{equation}
We focus on the integral in the variable $x$. By Lemma \ref{derivate} it rewrites as
\begin{eqnarray}
    \nonumber & & \sum_{\stackrel{0 \leq k,l,m \leq N} {k+l+m = N}}
\sum_{\stackrel{0 \leq k',l',m ' \leq N} {k'+l'+m' = N}} C_{k,l,m} C_{k',l',m'}
    \tau_c^{l} \tau_c^{l'} \widehat V^{(k)} (\xi)
    \overline{\widehat V^{(k')} (\xi')} \int dx \,
  f^{(l)} \left( \tau_c \xi + \frac x \eps - X \right) 
  f^{(l')} \left( \tau_c \xi' + \frac x \eps - X \right) \\ && \nonumber 
  P_m \left( \tau_c \xi + \frac x \eps \right) 
  P_{m'} \left( \tau_c \xi' + \frac x \eps \right)
  e^{i \Xi (\tau_c, \xi) - i \Xi (\tau_c,
  \xi')} .
\end{eqnarray}
For the sake of estimating \eqref{q1norm} we pass to the modulus and, with the change of variable $y = \frac x \eps$, we write
\begin{eqnarray*}
    \nonumber & & \left| \sum_{\stackrel{0 \leq k,l,m \leq N} {k+l+m = N}}
\sum_{\stackrel{0 \leq k',l',m ' \leq N} {k'+l'+m' = N}} C_{k,l,m} C_{k',l',m'} \,
    \tau_c^{l} \tau_c^{l'} \widehat V^{(k)} (\xi)
    \overline{\widehat V^{(k')} (\xi')} \right. \\ & & \left.
    \nonumber \int dx \,
  f^{(l)} \left( \tau_c \xi + \frac x \eps - X \right) 
  f^{(l')} \left( \tau_c \xi' + \frac x \eps - X \right) \nonumber 
   P_m \left( \tau_c \xi + \frac x \eps \right) 
  P_{m'} \left( \tau_c \xi' + \frac x \eps \right)
  e^{i \Xi (\tau_c, \xi) - i \Xi (\tau_c,
  \xi')} \right| \\
    \nonumber & \leq & \eps \sum_{\stackrel{0 \leq k,l,m \leq N} {k+l+m = N}}
\sum_{\stackrel{0 \leq k',l',m ' \leq N} {k'+l'+m' = N}} C_{k,l,m} C_{k',l',m'}
    \tau_c^{l} \tau_c^{l'} |\widehat V^{(k)} (\xi)|
    |{\widehat V^{(k')} (\xi')}| \nonumber \\ & & \int dy \,
 | f^{(l)} ( \tau_c \xi + y - X ) |
  |{f^{(l')} ( \tau_c \xi' + y - X )}|
 | P_m ( \tau_c \xi + y ) |
 |{P_{m'} ( \tau_c \xi' + y )} |
  \label{epsextraction}
  \\
  & \leq & 
\eps \sum_{\stackrel{0 \leq k,l,m \leq N} {k+l+m = N}}
\sum_{\stackrel{0 \leq k',l',m ' \leq N} {k'+l'+m' = N}} C_{k,l,m} C_{k',l',m'}
    \tau_c^{l} \tau_c^{l'} 
     \| f^{(l)} ( \cdot - X ) P_m \|
  \|{f^{(l')} ( \cdot- X )} P_{m'} \| 
    |\widehat V^{(k)} (\xi)|
    |{\widehat V^{(k')} (\xi')}| , 
  \end{eqnarray*}
  where we used Cauchy-Schwarz inequality. 
 Then, from \eqref{q1norm} one has
  \begin{equation}
      \begin{split}
          \| Q_1 (t, \cdot) \|^2 \ \leq \ &
          C \eps^2
          \sum_{\stackrel{0 \leq k,l,m \leq N} {k+l+m = N}}
\sum_{\stackrel{0 \leq k',l',m ' \leq N} {k'+l'+m' = N}} C_{k,l,m} C_{k',l',m'}
    \tau_c^{l} \tau_c^{l'} 
     \| f^{(l)} ( \cdot - X ) P_m \|
  \|{f^{(l')} ( \cdot- X )} P_{m'} \| \\
  &
          \int_{\R \backslash \Omega_\eps} \frac{dz}{z^N} \int_{\R \backslash \Omega_\eps} \frac{dz'}{(z')^N} \int d \xi \,
         | \widehat V^{(k)} (\xi) |
         \int d \xi' 
          \, |\widehat V^{(k')} (\xi')| \\
           \leq \ & C \frac {\eps^{2N}} {N-1} \left( \frac{1}{a^{N-1}} + \frac{1}{(k (t - \tau_c))^{N-1}} \right)^2 \sum_{\stackrel{0 \leq k,l,m \leq N} {k+l+m = N}}
\sum_{\stackrel{0 \leq k',l',m ' \leq N} {k'+l'+m' = N}} C_{k,l,m} C_{k',l',m'} \\ &
    \tau_c^{l} \tau_c^{l'}   \| f^{(l)} ( \cdot - X ) P_m \|
  \|{f^{(l')} ( \cdot- X )} P_{m'} \| 
  \| \widehat V^{(k)} \|_1 
   \| \widehat V^{(k')} \|_1 \\
    = \ & C \eps^{2N},
           \end{split}
  \end{equation}
so the first inequality in \eqref{estq1q2} is proven. 

In order to get the second inequality, we write the quantity $Q_2$ in a different way.
First, from \eqref{xxx} and \eqref{derfi},
\begin{equation}
    \begin{split}
        Q_2 (t,x) \ = \ & \frac{e^{i \frac{K}{\eps^2} x  + \frac{i}{\eps} \frac{a}{K}  } }{\sqrt{2\pi } \,K} \,  \sqrt{\eps} \int_{\Omega_{\eps}} \!\!\!\!dz\,  e^{-i \xi_c z} \!\!\int\!\!d\xi\, e^{i z \xi} \Big(F\big(\tau_c + \frac{\eps}{K} z, \xi\big)-  F\big(\tau_c , \xi\big) \Big) \\
        = \ &
    \frac{e^{i \frac{K}{\eps^2} x  + \frac{i}{\eps} \frac{a}{K}  } }{\sqrt{2\pi } \,K} \,  \sqrt{\eps} \int_{\Omega_{\eps}} \!\!\!\!dz\,  \frac{e^{-i \xi_c z}}{1+z^4} \!\!\int\!\!d\xi\, \int_0^{ \frac{z}{K} \eps} d \varphi \,
        e^{i z \xi} \partial_\varphi (1 + \partial_\xi^4) F\big(\tau_c + \varphi, \xi\big) \\
        = \ & \frac{e^{i \frac{K}{\eps^2}x+\frac i \eps a K}}{\sqrt{2\pi}K} \sqrt{\eps} \int_{\Omega_\eps} dz \, \frac{e^{-i \xi_c z}}{1+z^4}
        \int d\xi \,  \int_0^{\frac{z}{K} \eps} d \varphi \,
        e^{i z \xi} \, e^{i \Xi (\tau_c + \varphi, \xi)} \\ &
        \left[ \xi \widehat V (\xi) f' \left((\tau_c + \varphi) \xi + \frac{x}{\eps} - X \right) + \frac{i}{2} \xi^2 \widehat V (\xi) f \left((\tau_c + \varphi) \xi + \frac{x}{\eps} - X \right) 
        \right.
        \\ &
        \sum_{\stackrel{0 \leq k,l,m \leq 4}{k+l+m = 4}} \widetilde C_{k,l,m} \left( l
  (\tau_c + \varphi)^{l -1}\widehat V^{(k)} (\xi)
  f^{(l)} \left(( \tau_c + \varphi) \xi + \frac x \eps - X \right) \nonumber 
   P_m \left( (\tau_c + \varphi) \xi + \frac x \eps \right) \right. \\ &
  + ( \tau_c + \varphi)^l \xi \widehat V^{(k)}(\xi) 
   f^{(l+1)} \left(( \tau_c + \varphi) \xi + \frac x \eps - X \right) 
   P_m \left( (\tau_c + \varphi) \xi + \frac x \eps \right) \nonumber \\
   & 
  + ( \tau_c + \varphi)^l \xi \widehat V^{(k)}(\xi) 
   f^{(l)} \left(( \tau_c + \varphi) \xi + \frac x \eps - X \right) 
   P_m' \left( (\tau_c + \varphi) \xi + \frac x \eps \right) \nonumber \\
 & \left.
  + \frac i 2 ( \tau_c + \varphi)^l \xi^2 \widehat V^{(k)}(\xi) 
   f^{(l)} \left(( \tau_c + \varphi) \xi + \frac x \eps - X \right) 
   P_m \left( (\tau_c + \varphi) \xi + \frac x \eps \right)  \right], \nonumber 
    \end{split}
\end{equation}
where $\widetilde C_{k,l,m} = \frac{24}{k! l! m!}.$
In order to estimate the $L^2-$norm of $Q_2$, we follow the line traced for estimating $Q_1,$ but noticing that the integration in $z$ is on the domain $\Omega_\eps$ instead of its complementary. Furthermore, we will omit the details of the computation since for each term they are 
analogous to those used for estimating $Q_1.$
Then
\begin{equation}
    \begin{split}
        \| Q_2 (t, \cdot) \| \ \leq \ &
        \frac{\eps}{\sqrt{2 \pi} K} \int_{\Omega_{\eps}}   \frac{dz}{1+z^4} 
        \int d\xi \,  \int_0^{\frac{z}{K} \eps} \left| d \varphi \,
         \left(
        | \xi \widehat V (\xi)| \| f' \| + \frac{1}{2} |\xi^2 \widehat V (\xi)| \| f \| 
        \right. \right.
        \\ & +
        \sum_{\stackrel{0 \leq k,l,m \leq 4}{k+l+m = 4}} \widetilde C_{k,l,m} \left( l
  (\tau_c + \varphi)^{l -1} |\widehat V^{(k)} (\xi) | \|
  f^{(l)} \left( \cdot - X \right) \nonumber 
   P_m \| \right. \\ & \left. 
  + ( \tau_c + \varphi)^l |\xi \widehat V^{(k)}(\xi) | \|
   f^{(l+1)} \left( \cdot - X \right) 
    P_m \|
  + ( \tau_c + \varphi)^l | \xi \widehat V^{(k)}(\xi) | \|
   f^{(l)} \left( \cdot - X \right) 
   P_m'  \| \right. \\
 & \left. \left.
  + \frac 1 2 ( \tau_c + \varphi)^l  |\xi^2 \widehat V^{(k)}(\xi) | \|
   f^{(l)} \left( \cdot - X \right) 
   P_m \|   \right) \right|. \nonumber 
            \end{split}
\end{equation}
For each term, the integral in $\varphi$ gives a contribution of order $\eps$, the integral
in $\xi$ is finite since $V$ is in the Schwartz class, and the integral in $z$ becomes
$$ \int_{\Omega_\eps} dz \, \frac{|z|}{1 + z^4} \ \leq \ C.$$
Then we conclude
$$ \| Q_2 (t, \cdot) \| \ \leq \ C \eps^2$$
and the proof is complete.

\end{proof}

\subsection{Estimate of ${\mathcal J} (t)$}

\begin{proposition}
    Let us fix $t > T_{coll}.$ Then
    $$\| {\mathcal J} (t) \|_{\mathcal K} \ \leq \ C \ \eps^2,$$
    where the constant $C$ depends on $t$, $a, P, f, V$.
\end{proposition}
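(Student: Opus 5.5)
Since $\mathcal U^{1,s}(t-\tau_1)$ is unitary and $\|V^\eps\sigma_2\|\le \|V\|_\infty$ is bounded uniformly in $\eps$, a naive bound on \eqref{jt} gives only $O(1)$. The factor $\eps^2$ must therefore come from the oscillatory structure of the inner integral. Our plan is to first use unitarity to strip the outer propagator, and then exploit the fact that the inner integral is exactly the first-order object already analysed:
$$
\int_0^{\tau_1}\!\! d\tau_2\, \mathcal U_0^{1,s}(-\tau_2) V^\eps\sigma_2\, \mathcal U_0^{1,s}(\tau_2)\left(\!\!\begin{array}{c}0\\ \phi_P\end{array}\!\!\right) = -\left(\!\!\begin{array}{c} I(\tau_1)\phi_P \\ 0\end{array}\!\!\right).
$$
Hence
$$
\|\mathcal J(t)\|_{\mathcal K} \le \int_0^t d\tau_1\, \big\| V^\eps\, U_0(\tau_1) I(\tau_1)\phi_P\big\|_{L^2(\R)},
$$
where we used that $\mathcal U_0^{1,s}(\tau_1)$ acts diagonally up to a phase, and that $\sigma_2$ only swaps components.

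We then apply the decomposition \eqref{xxx} with $X=0$, $K=P$, valid for every $\tau_1\in[0,t]$ (with $t$ replaced by $\tau_1$ in $\Omega_\eps$):
$$
I(\tau_1)\phi_P = \sqrt{\eps}\,(\text{main term}) + Q_1^0(\tau_1)+Q_2^0(\tau_1)\,.
$$
We estimate the pieces separately.

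\emph{The $Q_2^0$ piece.} The bound $\|Q_2^0(\tau_1)\|\le C\eps^2$ from the previous theorem holds uniformly in $\tau_1$, because the $z$-integral was bounded over all of $\R$. Integrating in $\tau_1$ over $[0,t]$ gives $O(\eps^2)$.

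\emph{The main term and $Q_1^0$ together.} The main term has norm $O(\eps)$, and $Q_1^0$ is not small when $\tau_1<\tau_c$, since then $\Omega_\eps$ is short. So we must not bound these crudely; instead we use the time integral and the factor $V^\eps$. Two facts drive the argument. First, for $\tau_1<\tau_c-\eps^{1-\beta}$ the quantity $I(\tau_1)\phi_P$ is itself $O(\eps^N)$. This holds because the $z$-integral of $e^{-i\xi_c z}\widehat{F}$ over $[-a/\eps,\,P(\tau_1-\tau_c)/\eps]$ misses the stationary point, so it can be handled by the same integration by parts as for $Q_1$. Second, for $\tau_1>\tau_c+\eps^{1-\beta}$ the function $I(\tau_1)\phi_P$ is, up to $O(\eps^2)$, equal to $\eps A$, which is a wave packet of width $\eps$ with momentum $\approx P$. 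After applying $U_0(\tau_1)$ it sits near $P\tau_1 + a\eps/P^2$, at distance of order $\tau_1-\tau_c$ from $a$. Using the Schwartz decay of $V$ and $f$, this gives $\|V^\eps U_0(\tau_1)A\|\le C_N\big(1+|\tau_1-\tau_c|/\eps\big)^{-N}$, whose $\tau_1$-integral is $O(\eps)$. Combined with the prefactor $\eps$, this contributes $O(\eps^2)$.

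\emph{The transition window.} In the window $|\tau_1-\tau_c|\lesssim\eps^{1-\beta}$, the crude bound $\|I(\tau_1)\phi_P\|\le C\eps$ together with the window length $\eps^{1-\beta}$ gives only $\eps^{2-\beta}$. This is the main obstacle. To close it we will not split at $\eps^{1-\beta}$. Instead we work directly in the rescaled variable $z=P(\tau_1-\tau_c)/\eps$, writing $\tau_1 = \tau_c+\eps z/P$ so that $d\tau_1=\eps\,dz/P$. The goal is the uniform estimate
$$
\|V^\eps U_0(\tau_1)I(\tau_1)\phi_P\|\le C\eps\,(1+|z|)^{-2}.
$$
We obtain this from the explicit kernel representation \eqref{I(t)} by composing with $U_0(\tau_1)$ and localising near $x_1=a$. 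The key point is that $U_0(\tau_1)U_0(-\tau_2)V^\eps U_0(\tau_2)=U_0(\tau_1-\tau_2)V^\eps U_0(\tau_2)$ keeps the packet at distance of order $\eps|z-z_2|$ from the scatterer. Integrating the resulting decay over $z\in\R$ then yields $\eps\cdot\eps=\eps^2$, which proves the claim.
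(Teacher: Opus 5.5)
Your argument is correct, but its decisive step differs from the paper's.

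\emph{The paper's route.} The paper iterates Duhamel once more, $\mathcal J=\mathcal J_0+\mathcal M$. It writes the fully free term $\mathcal J_0$ as an explicit oscillatory integral in $(z_1,z_2,\xi_1,\xi_2)$, as in Appendix B. It then integrates by parts twice in each $\xi_i$ to gain $(1+z_1^2)^{-1}(1+z_2^2)^{-1}$, and bounds $\mathcal M$ by unitarity together with $\int_0^t\|\mathcal J_0(\tau)\|\,d\tau$.

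\emph{Your route.} You apply unitarity to $\mathcal U^{1,s}(t-\tau_1)$ immediately, so no $\mathcal M$ appears, and you close with a position-space localization estimate. Once your uniform bound $\|V^\eps U_0(\tau_1)I(\tau_1)\phi_P\|\le C\eps(1+|z|)^{-2}$ is established, you no longer need the decomposition into the main term, $Q_1^0$, $Q_2^0$, nor the split at $\eps^{1-\beta}$. That bound is the whole proof, so present only it, proved as follows:
\begin{itemize}
\item write $U_0(\tau_1)I(\tau_1)\phi_P=\int_0^{\tau_1}e^{i\tau_2/\eps}\,U_0(\tau_1-\tau_2)V^\eps U_0(\tau_2)\phi_P\,d\tau_2$ and put the norm inside the integral (no cancellation is used);
\item work in the variable $y=(x-a)/\eps$, where $U_0(s)=e^{is\Delta_y/2}$, and show that the emitted packet $V^\eps U_0(\tau_2)\phi_P$ has norm and weighted seminorms $O\big((1+|z_2|)^{-M}\big)$;
\item by a Galilean boost with velocity $P/\eps$ and bounded dispersion for $s\le t$, show $\|V^\eps U_0(\tau_1-\tau_2)V^\eps U_0(\tau_2)\phi_P\|\le C_M(1+|z_2|)^{-M}(1+|z_1-z_2|)^{-M}$.
\end{itemize}
The convolution of these two decays in $z_2$ gives your bound, and the $\tau_1$-integral, with $d\tau_1=\eps\,dz/P$, supplies the second factor of $\eps$.

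\emph{What each buys.} Your route makes transparent that $\mathcal J=O(\eps^2)$ comes only from each of the two interactions being confined to an $O(\eps)$ time window around $\tau_c$, with no stationary-phase input. The paper's integration by parts in $\xi_i$ is the Fourier-dual form of the same localization. Its merit is that it reuses the Appendix B machinery and yields an explicit formula for $\mathcal J_0$ that could be expanded to higher order.
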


\begin{proof}
    From the definition \eqref{jt} of ${\mathcal J} (t)$
    one has
    \begin{equation}
        \label{anotherduhamel}
        {\mathcal J} (t) \ = \ {\mathcal J}_0 (t) +
        {\mathcal M} (t)
\end{equation}
where
\begin{equation}
    \label{j0}
    {\mathcal J_0} (t) \ = \
   -  \int_0^{t} \!\!d\tau_1 \, \mathcal U_0^{1,s} (t -\tau_1) V^{\eps} \sigma_2\, \mathcal U_0^{1,s} (\tau_1)
    \int_0^{\tau_1} \!\!d\tau_2\, \mathcal U_0^{1,s} (-\tau_2) V^{\eps} \sigma_2\, \mathcal U_0^{1,s} (\tau_2)
    \left(\!\! \begin{array}{c} 0  \\ \phi_{ P}  \end{array} \! \! \right)
\end{equation}
and
\begin{equation} \label{mt} \begin{split}
{\mathcal M} (t) \ = \ & i 
 \int_0^t \!\! d\tau_1 \, \mathcal U^{1,s} (t-\tau_1) V^{\eps} \sigma_2\, \mathcal U_0^{1,s} (\tau_1) 
\int_0^{\tau_1} \!\!d\tau_2\, \mathcal U_0^{1,s} (-\tau_2) V^{\eps} \sigma_2\, \mathcal U_0^{1,s} (\tau_2) \\ &
\int_0^{\tau_2} \!\!d\tau_3\, \mathcal U_0^{1,s} (-\tau_3) V^{\eps} \sigma_2\, \mathcal U_0^{1,s} (\tau_3)
\left(\!\! \begin{array}{c} 0  \\ \phi_{ P}  \end{array} \! \! \right).
\end{split}
        \end{equation}
First we estimate $\| {\mathcal J}_0 (t) \|_{\mathcal K}.$
We preliminarily observe that, since the matrix $\sigma_2$ acts twice, the only non-trivial
component of ${\mathcal J} (t)$ is the second one, so
the $\mathcal K$-norm of $\mathcal J (t)$ equals the norm of such  component, denoted by ${\mathcal J}_0^{(2)}$, in the space $L^2 (\R)$. 

By a straightforward computation one finds
\begin{equation} \label{J02}
\begin{split}
   {\mathcal J}_0^{(2)} (t)   \ = \ &
    - U_0 (t)
 \int_0^{t} \!\!d\tau_1 \, e^{-i \frac{\tau_1}{\eps}} U_0 (-\tau_1) V^{\eps}  U_0 (\tau_1)
    \int_0^{\tau_1} \!\!d\tau_2\, 
    e^{i \frac{\tau_2}{\eps}}
    U_0 (-\tau_2) V^{\eps}  U_0^{1,s} (\tau_2)
    \phi_{ P} .
\end{split}
\end{equation}
For convenience we introduce the notation
\begin{equation}
    \begin{split}
        J_0 (t,x)
        \ = \ 
        \left(
\int_0^{t} \!\!d\tau_1 \, e^{-i \frac{\tau_1}{\eps}} U_0 (-\tau_1) V^{\eps}  U_0 (\tau_1)
    \int_0^{\tau_1} \!\!d\tau_2\, 
    e^{i \frac{\tau_2}{\eps}}
    U_0 (-\tau_2) V^{\eps}  U_0^{1,s} (\tau_2)
    \phi_{ P} \right) (x)
\end{split}
\end{equation}
and notice that
\begin{equation}
    \| {\mathcal J}_0 (t) \|_{\mathcal K} =
    \| J_0 (t) \|_{L^2 (\R).}
\end{equation}
Using the explicit expression for the 
propagators and
proceeding as in Appendix B, one
gets
\begin{equation}
    \begin{split}
{J}_0 (t,x)
\ = \ & 
\int_{\R^6} d \xi_1 
\, d \xi_2 \, dy \, dz \, d \eta \, dz \,
\int_0^t d\tau_1 \int_0^{\tau_1} d \tau_2 \, \frac{e^{i 
{\Upsilon
(x,\xi_1, \xi_2,y,z,\eta, \zeta, \tau_1, \tau_2)/\eps}}}{8 \pi^3 \tau_1 \tau_2 \eps^4}
\widehat V (\xi_1)
\widehat V (\xi_2)
 \phi_P (\zeta)
\end{split}
\end{equation}
where
\begin{equation}
    \begin{split}
        \Upsilon
(x,\xi_1, \xi_2,y,z,\eta, \zeta, \tau_1, \tau_2) = & \frac{\tau_2 - \tau_1} \eps +
\frac{z^2 - x^2 + 2 xy - 2 yz}{2 \tau_1 \eps^2} + \frac{\zeta^2 - z^2 + 2 z \eta - 2 \eta \zeta}{2 \tau_2 \eps^2} + \\ & +\frac{y \xi_1 + \eta \xi_2 - a (\xi_1 + \xi_2)}
{\eps}.
    \end{split}
\end{equation}
Using
\begin{equation}
    \begin{split}
        \int_\R
        d \eta \, e^{i \eta \left(\frac{z -  \zeta}
{\tau_2 \eps^2} + \frac {\xi_2} \eps        \right)} \ = \
2 \pi \tau_2 \eps^2 
\delta (z - \zeta +
\eps \tau_2 \xi_2) \\
\int_\R
        d y \, e^{i y
        \left(\frac{x-z}
{\tau_1 \eps^2} + \frac {\xi_1} \eps        \right)} \ = \
2 \pi \tau_1 \eps^2 
\delta (x-z +
\eps \tau_1 \xi_1) 
\end{split}
\end{equation}
and then integrating
in the variables $z$ and $\zeta,$ one obtains
\begin{equation}
    \begin{split}
        J_0 (t,x) 
        \ = \ &
        \frac {e^{iPx/\eps^2}} {2 \pi \sqrt \eps} \int_{\R^2}
d \xi_1 \, d \xi_2 \int_0^t d \tau_1 \int_0^{\tau_1} d \tau_2 \, \widehat V (\xi_1) \widehat V (\xi_2) \, f \left(
\frac x \eps +  \tau_1 \xi_1 +
\tau_2 \xi_2 \right)
\\
& \times e^{i \Upsilon_1^\eps (x, \xi_1, \xi_2, \tau_1,
\tau_2)}
e^{i \Upsilon_2^\eps (x, \xi_1, \xi_2, \tau_1,
\tau_2)/\eps}
\end{split}
\end{equation}
where
\begin{equation}
\begin{split}
    \label{upsilon}
    \Upsilon_1^\eps (x,\xi_1,\xi_2, \tau_1, \tau_2) & \ = \ \f x \eps (\xi_1 + \xi_2) + \frac 1 2
    (\tau_1 \xi_1^2 + \tau_2 \xi_2^2) + \tau_1 \xi_1 \xi_2 \\
    \Upsilon_2^\eps (x,\xi_1,\xi_2, \tau_1, \tau_2) & \ = \
    {\tau_2- \tau_1} -
     a (\xi_1 + \xi_2) + P (\tau_1 \xi_1 + \tau_2 \xi_2) \\
     & \ = \ P (\tau_1 - \tau_c) (\xi_1 + \xi_c) + P (\tau_2 -
     \tau_c) (\xi_2 - \xi_c),
\end{split}
\end{equation}
where, as in Section \ref{strategy}, we set
$\tau_c = \frac a P$, $\xi_c = - \frac 1 P$,
perform
the changes of variable
$z_i = \frac{P}{\eps} (\tau_i - \tau_c), \ i = 1,2$, and obtain
\begin{equation}
    \begin{split}
        J_0 (t,x) 
        \ = \ &
        \frac {e^{iPx/\eps^2}} {2 \pi P^ 2} \eps^{\frac 3 2}
        \int_{\Omega_{1, \eps}}
        dz_1  \, e^{i z_1 \xi_c}
        \int_{\Omega_{2, \eps}}
        d z_2 \, e^{-i z_2 \xi_c}
        \int_\R d\xi_1 \,
        e^{i z_1 \xi_1}
        \int_\R d\xi_2 \,
        e^{i z_2 \xi_2}  \\ & \times \
        \widehat V (\xi_1) \, \widehat V (\xi_2) \,
        f \left( \frac x \eps
        + \tau (z_1) \xi_1 + \tau (z_2) \xi_2 \right) \,
        e^{i \Lambda_\eps (x/ \eps,z_1, z_2, \xi_1, \xi_2)}
\end{split}
\end{equation}
where we introduced the notation $\tau (z) = \frac{\eps z}{P} + \tau_c,$ $\Omega_1 = [ -a/\eps, P (t - \tau_c)  / \eps]$,  $\Omega_2 = [ -a/\eps, P (\tau(z_1) - \tau_c)  / \eps]$,  and
\begin{equation}
\label{lambda}
\Lambda_\eps (y, z_1, z_2, \xi_1, \xi_2) = \frac 1 2 \tau (z_1) (\xi_1^2 + \xi_1 \xi_2) + y (\xi_1 + \xi_2)  
+ \frac 1 2 \tau(z_2) \xi_2^2.
\end{equation}
    Proceeding analogously to  \eqref{q1norm}, we use the identity
    $$e^{iz\xi} \ = \ \frac{1}{1+z^2} (1 - \partial_\xi^2) e^{iz\xi}$$
for the variables $\xi_1$ and $\xi_2$, and finally obtain 
\begin{equation}
    \begin{split}
        J_0 (t,x) 
        \ = \ &
        \frac {e^{iPx/\eps^2}} {2 \pi P^ 2} \eps^{\frac 3 2}
        \int_{\Omega_{1, \eps}}
        dz_1  \, \frac{e^{i z_1 \xi_c}}{1 + z_1^2}
        \int_{\Omega_{2, \eps}}
        d z_2 \, \frac{e^{-i z_2 \xi_c}}{1 + z_2^2}
        \int_\R d\xi_1 \,
        e^{i z_1 \xi_1}
        \int_\R d\xi_2 \,
        e^{i z_2 \xi_2}  \\ & \times \ (1 - \partial_{\xi_1}^2)
        (1 - \partial_{\xi_2}^2)
        \widehat V (\xi_1) \, \widehat V (\xi_2) \,
        f \left( \frac x \eps
        + \tau(z_1) \xi_1 + \tau(z_2) \xi_2 \right) \,
        e^{i \Lambda_\eps (x,z_1, z_2, \xi_1, \xi_2)} \\
\ = \ & \frac {e^{iPx/\eps^2}} {2 \pi P^ 2} \eps^{\frac 3 2}
        \int_{\Omega_{1, \eps}}
        dz_1  \, \frac{e^{i z_1 \xi_c}}{1 + z_1^2}
        \int_{\Omega_{2, \eps}}
        d z_2 \, \frac{e^{-i z_2 \xi_c}}{1 + z_2^2}
        \int_\R d\xi_1 \,
        e^{i z_1 \xi_1}
        \int_\R d\xi_2 \,
        e^{i z_2 \xi_2}  \\ & \times \ 
        \widehat V (\xi_1) \, \widehat V (\xi_2) \,
        f \left( \frac x \eps
        + \tau(z_1)\xi_1 + \tau (z_2)\xi_2 \right) \,
        e^{i \Lambda_\eps (x,z_1, z_2, \xi_1, \xi_2)} \\
        & - \frac {e^{iPx/\eps^2}} {2 \pi P^ 2} \eps^{\frac 3 2}
        \int_{\Omega_{1, \eps}}
        dz_1  \, \frac{e^{i z_1 \xi_c}}{1 + z_1^2}
        \int_{\Omega_{2, \eps}}
        d z_2 \, \frac{e^{-i z_2 \xi_c}}{1 + z_2^2}
        \int_\R d\xi_1 \,
        e^{i z_1 \xi_1}
        \int_\R d\xi_2 \,
        e^{i z_2 \xi_2}  \\ & \times \partial_{\xi_1}^2
        \widehat V (\xi_1) \, \widehat V (\xi_2) \,
        f \left( \frac x \eps
        + \tau (z_1)\xi_1 + \tau (z_2)\xi_2 \right) \,
        e^{i \Lambda_\eps (x,z_1, z_2, \xi_1, \xi_2)}
\\
        & - \frac {e^{iPx/\eps^2}} {2 \pi P^ 2} \eps^{\frac 3 2}
        \int_{\Omega_{1, \eps}}
        dz_1  \, \frac{e^{i z_1 \xi_c}}{1 + z_1^2}
        \int_{\Omega_{2, \eps}}
        d z_2 \, \frac{e^{-i z_2 \xi_c}}{1 + z_2^2}
        \int_\R d\xi_1 \,
        e^{i z_1 \xi_1}
        \int_\R d\xi_2 \,
        e^{i z_2 \xi_2}  \\ & \times \partial_{\xi_2}^2
        \widehat V (\xi_1) \, \widehat V (\xi_2) \,
        f \left( \frac x \eps
        + \tau (z_1)\xi_1 + \tau (z_2) \xi_2 \right) \,
        e^{i \Lambda_\eps (x,z_1, z_2, \xi_1, \xi_2)}
\\
        & + \frac {e^{iPx/\eps^2}} {2 \pi P^ 2} \eps^{\frac 3 2}
        \int_{\Omega_{1, \eps}}
        dz_1  \, \frac{e^{i z_1 \xi_c}}{1 + z_1^2}
        \int_{\Omega_{2, \eps}}
        d z_2 \, \frac{e^{-i z_2 \xi_c}}{1 + z_2^2}
        \int_\R d\xi_1 \,
        e^{i z_1 \xi_1}
        \int_\R d\xi_2 \,
        e^{i z_2 \xi_2}  \\ & \times \partial_{\xi_1}^2 \partial_{\xi_2}^2
        \widehat V (\xi_1) \, \widehat V (\xi_2) \,
        f \left( \frac x \eps
        + \tau(z_1)\xi_1 + \tau (z_2)\xi_2 \right) \,
        e^{i \Lambda_\eps (x,z_1, z_2, \xi_1, \xi_2)} \\
        = & J_{0,1} (t,x) + J_{0,2} (t,x) + J_{0,3} (t,x) + + J_{0,4} (t,x), 
\end{split}
\end{equation}
For the convenience of the reader, we focus on the estimate of $J_{0,1} (t)$
and $J_{0,4} (t)$. The two remaining terms can be treated in the same way.
\begin{equation}
    \begin{split}
\| J_{0,1} (t, \cdot) \| \ \leq & \ 
\frac {1} {2 \pi P^2} \eps^{\frac 3 2}
        \int_{\Omega_{1, \eps}}
       \frac{d z_1}{1 + z_1^2}
        \int_{\Omega_{2, \eps}}
         \frac{{d z_2}}{1 + z_2^2}
        \int_\R d\xi_1 \,
         \left| \widehat V (\xi_1) \right| 
        \int_\R d\xi_2 \,
       \left| \widehat V (\xi_1) \right| 
        \\ & \times \ 
      \left\| f \left( \frac x \eps
        + \tau (z_1) \xi_1 + \tau (z_2) \xi_2 \right) \right\|_{L^2 (\R_x)}
         \\
         = & \ C \, \eps^2.
    \end{split}
\end{equation}
Proceeding with $J_{0,4}$, one has to
distribute two derivatives  in $\xi_1$ and two in $\xi_2$ among the factors in the 
integrand. Then
\begin{equation} \label{j04}
    \begin{split}
\| J_{0,4} (t, \cdot) \| \ \leq &  \ 
\frac {\eps^{2}} {2 \pi P^2} 
\sum_{\stackrel{0 \leq k_1, \ell_1, m_1 \leq 2} { k_1+ \ell_1+  m_1 = 2}}
\sum_{\stackrel{0 \leq k_2, \ell_2, m_2 \leq 2} { k_2+ \ell_2+  m_2 = 2}}
\int_{\Omega_{1, \eps}}  \, \frac{{d z_1}}{1 + z_1^2}
        \int_{\Omega_{2, \eps}}
         \frac{{d z_2}}{1 + z_2^2} \\ &
        \int_\R d\xi_1 \,
         \left| \widehat V^{(k_1)} (\xi_1) \right| 
        \int_\R d\xi_2 \,
       \left| \widehat V^{(k_2)} (\xi_1) \right| 
        \ 
       \tau(z_1)^{\ell_1} (z_1) \tau(z_2)^{\ell_2} (z_2)\\ &
       \| f^{(\ell_1 + \ell_2)} \left( y 
        + \tau(z_1) \xi_1 + \tau(z_2) \xi_2 \right)
\partial_{\xi_1}^{m_1} \partial_{\xi_2}^{m_2} \Lambda_\eps \left(
 y , z_1, z_2, \xi_1, \xi_2 \right)
        \|_{L^2 (\R_y)}
\end{split}
\end{equation}
By \eqref{lambda}
$$
 \partial_{\xi_1}^{m_1} \partial_{\xi_2}^{m_2} \Lambda_\eps \left(
 y , z_1, z_2, \xi_1, \xi_2 \right) = P (y, \eps z_1,  \eps z_2,
 \xi_1, \xi_2),
$$
where $P$ is a polynomial whose maximum degree is one in   $\eps z_1$,
 $\eps z_2$ and $y$, and 
two in  $\xi_1$ and $\xi_2$.
So, one has
\begin{equation} \begin{split}
\label{smontanorma}
     & \| f^{(\ell_1 + \ell_2)} \left(  y 
        + \tau (z_1) \xi_1 + \tau (z_2) \xi_2 \right)
\partial_{\xi_1}^{m_1} \partial_{\xi_2}^{m_2} \Lambda_\eps \left(
 y , z_1, z_2, \xi_1, \xi_2 \right) \|_{L^2(\R_y)}  \\
  \leq  \ &  C  \, (1 + \eps| z_1| + \eps | z_2| +  \xi_1^2+  \xi_2^2 ) \| (1 + |y|) f^{(\ell_1 + \ell_2)} (y + \tau (z_1) \xi_1 + \tau
        (z_2) \xi_2) \|_{L^2 (\R_y)} \\
         \leq  \ &  C \, (1 + \eps| z_1| + \eps | z_2| +   \xi_1^2 +  \xi_2^2 ) \| (1 + |y| + | \xi_1| + |\xi_2|  + \eps|z_1| + \eps |z_2|) f^{(\ell_1 + \ell_2)} (y ) \|_{L^2 (\R_y)} \\
         \leq  \ &  C (1 +  \eps^2 z_1^2 +  \eps^2 z_2^2 +  | \xi_1|^3 + | \xi_2|^3 ) \| (1 + |y| ) f^{(\ell_1 + \ell_2)} (y ) \|_{L^2 (\R_y)} \\
         \leq  \ &  C \, (1 + \eps^2 | z_1|^2 + \eps^2 | z_2|^2 +  | \xi_1|^3 + | \xi_2|^3 ). 
         \end{split}
         \end{equation}

        \n
Inserting \eqref{smontanorma} in \eqref{j04} one finally gets
\begin{equation} \begin{split}
\label{j04finally}
\| J_{0,4} (t) \| \ \leq \ & C \,
\frac {\eps^{2}} {2 \pi P^2} 
\sum_{\stackrel{0 \leq k_1, \ell_1, m_1 \leq 2} { k_1+ \ell_1+  m_1 = 2}}
\sum_{\stackrel{0 \leq k_2, \ell_2, m_2 \leq 2} { k_2+ \ell_2+  m_2 = 2}}
\int_{\Omega_{1, \eps}}  \, \frac{dz_1}{1 + z_1^2} 
        \int_{\Omega_{2, \eps}} \, \frac{dz_2}{1 + z_2^2} 
  \\ &
        \int_\R d\xi_1 \, 
         \left| \widehat V^{(k_1)} (\xi_1) \right| 
        \int_\R d\xi_2 \,
        \left| \widehat V^{(k_2)} (\xi_2) \right| 
        (1 + \eps^2 z_1^2 +  \eps^2 z_2^2 +  | \xi_1|^3 + | \xi_2|^3) . \end{split}
\end{equation}
The integrals in $\xi_1$ and $\xi_2$ are finite due to the rapid 
decay of $\widehat V.$ The integrals in $z_1$ and $z_2$ prove finite too, owing to the elementary estimate
$$
\int_{\Omega_{\eps}} \frac {(\eps |z|)^\beta} {1 + z^2} \, dz 
\ \leq \  \eps^\beta \frac{P^\beta}{\eps^\beta} t^\beta  \int_\R  \frac {dz} {1 + z^2}  
\ \leq \ C ,
$$
where we used $\sup | z | \leq \frac P \eps t $ in the intergation domain.

Then, from \eqref{j04finally},
\begin{equation} \begin{split}
    \| {\mathcal J}_0 (t) \|_{\mathcal K}  \ = \ &
    \left\|
 \int_0^{t} \!\!d\tau_1 \, e^{-i \frac{\tau_1}{\eps}} U_0 (-\tau_1) V^{\eps}  U_0 (\tau_1)
    \int_0^{\tau_1} \!\!d\tau_2\, 
    e^{i \frac{\tau_2}{\eps}}
    U_0 (-\tau_2) V^{\eps}  U_0^{1,s} (\tau_2)
    \phi_{ P}  \right\|_{L^2 (\R)} \ \leq \ C\eps^2 .
\end{split}
\end{equation}
To estimate $\mathcal M (t)$, from \eqref{mt}, using the unitarity of
$ \mathcal U^{1,s} (t-\tau_1)$,
\begin{equation}  \begin{split}
\| {\mathcal M} (t) \|_\mathcal K \ \leq \ &  \| V \|_\infty
 \int_0^t \!\! d\tau_1 \, \| {\mathcal {J}}_0 (\tau_1) \|_{\mathcal K} \ \leq \ C \eps^2.
\end{split}
        \end{equation}

\end{proof}

\vs

\section{Estimate of $\mathcal L(t)$}

\begin{proposition}
Let us fix $t>0$. Then  
\begin{align} 
 \| \mathcal L(t)\|_{\mathcal K} \leq C_n (t)  \, \eps^n \;\;\;\;\;\; \forall n \in \N 
 \end{align}
where the dependence of the constant $C_n (t)$ on $t$ and on the other parameters $ a, P, f, V$ is given during the proof.
\end{proposition}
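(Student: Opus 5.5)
First, use unitarity of $\mathcal U^{1,s}(t-\tau)$ and of $U_0(t)$ on particle $2$ to remove the interacting dynamics from the estimate:
\[
\|\mathcal L(t)\|_{\mathcal K} \le \int_0^t d\tau\, \| V^\eps\, U_0(\tau)\phi_{-P}\|_{L^2(\R)} .
\]
Here $\sigma_2$ only exchanges the components and the phase $e^{i\tau/(2\eps)}$ has modulus one. So everything reduces to a bound, uniform in $\tau\in[0,t]$, on the amount of the freely evolving packet with momentum $-P$ that lies inside the interaction region around $a>0$. The goal is to show
\[
\sup_{\tau\in[0,t]}\|V^\eps U_0(\tau)\phi_{-P}\| \le C_n(t)\,\eps^n,
\]
which gives the claim after integration in $\tau$.

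Second, compute $U_0(\tau)\phi_{-P}$ explicitly in Fourier variables, as in Appendix B. In the scaled variable $y=x/\eps$ one gets
\[
U_0(\tau)\phi_{-P}(x)=\frac{1}{\sqrt{2\pi\eps}}\int dk\,\hat f(k)\, e^{ik(x/\eps + P\tau/\eps)}\, e^{-i\tau k^2/2}\, e^{-iPx/\eps^2 - iP^2\tau/(2\eps^2)} .
\]
Up to a phase this is $\eps^{-1/2} g_\tau\big((x+P\tau)/\eps\big)$, where $g_\tau=e^{i\tau\partial_y^2/2}f$ is a free Schrödinger evolution on the $O(1)$ scale with $\tau\le t$. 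Its center is at $x=-P\tau\le 0$. Since $f\in\mathcal S$, $g_\tau$ is Schwartz with seminorms bounded uniformly for $\tau\in[0,t]$. Concretely, $|g_\tau(y)|\le C_n(1+t)^n(1+|y|)^{-n}$, which follows by writing $y^n g_\tau=e^{i\tau\partial^2/2}(y+i\tau\partial_y)^n f$ (Galilean commutation) and using Sobolev embedding.

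Third, change variables $x=a+\eps s$ in $\|V^\eps U_0(\tau)\phi_{-P}\|^2$, which gives
\[
\int ds\,|V(s)|^2\,\big|g_\tau\big(s+(a+P\tau)/\eps\big)\big|^2 .
\]
The argument of $g_\tau$ has size at least $(a+P\tau)/\eps-|s|\ge a/\eps-|s|$. Split the integral into $|s|\le a/(2\eps)$ and its complement. On the first region, the decay of $g_\tau$ gives a factor $(a/2\eps)^{-2n}\|V\|_2^2$. On the second region, the decay of $V$ gives a factor $\sup_{|s|>a/(2\eps)}|V(s)|^2\,\|g_\tau\|^2\le C_n (\eps/a)^{2n}$. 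Both pieces are $O(\eps^{2n})$, with constants polynomial in $t$ and in $1/a$, and depending on finitely many Schwartz seminorms of $f$ and $V$. This yields $\|\mathcal L(t)\|\le C_n(t)\eps^n$ with $C_n(t)\sim t(1+t)^n a^{-n}$ times those seminorms.

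The main obstacle is the uniform decay of $g_\tau$ in the second step. The point is that in scaled units the dispersion $\tau\partial_y^2$ is only $O(1)$, so the spreading is polynomially controlled. This is exactly what the scaling $\hbar=\eps^2$, $\sigma=\eps$ provides: the packet width at time $\tau$ is of order $\eps(1+\tau)$, while its distance from the spin is at least $a$. Once the weighted bound on $g_\tau$ is established, the remaining steps are elementary.
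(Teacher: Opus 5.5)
Your proof is correct and follows essentially the paper's route: the same unitarity reduction to $\int_0^t \|V^\eps U_0(\tau)\phi_{-P}\|\,d\tau$, then a two-region split (decay of $V$ where the packet may be large, decay of the freely evolved packet where $V^\eps$ may be large). In both, the packet's decay comes from $n$ derivatives in $k$ of the free phase times $\hat f$: the paper integrates by parts in $k$, the Fourier-side form of your Galilean identity, whose correct sign is $y^n g_\tau = e^{i\tau\partial_y^2/2}(y - i\tau\partial_y)^n f$ (a harmless slip), while splitting at $x=a/2$ and keeping the distance $a+2P\tau$ to get milder growth in $t$, which is irrelevant for fixed $t$.
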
 

\begin{proof}

From the definition of $\mathcal L(t)$ (see \eqref{Lt}) we have

\begin{align}\label{Lt1}
 \| \mathcal L(t)\|_{\mathcal K} &= \left\| \int_0^t \!\! d\tau \,  \mathcal U^{1,s} (t-\tau) V^{\eps} \sigma_2 \, \mathcal U_0^{1,s} (\tau) \left(\!\! \begin{array}{c} 0  \\\phi_{-P}   \end{array} \! \! \right) \otimes U_0(t) \phi_P 
\right\|_{\mathcal K} \nonumber\\
& \leq \int_0^t \!\!d\tau\left\| V^{\eps} \sigma_2 \, \mathcal U_0^{1,s} (\tau)     \left(\!\! \begin{array}{c} 0  \\\phi_{-P}  \end{array} \! \! \right) \right\|_{\mathcal K^{1,s} }  \|U_0(t) \phi_P \| \nonumber\\
& = \int_0^{t} \!\!d\tau\, \left\| V^{\eps} U_0(\tau) \phi_{-P} \right\| \,.
\end{align}

\noindent
The estimate of the $L^2$-norm of $ V^{\eps}  U_0(\tau) \phi_{-P} $ is elementary  if one takes into account that for $\eps$ small the function $V^{\eps}$ is strongly concentrated around $x=a >0$ while the function $ U_0(\tau) \phi_{-P} $ is strongly concentrated around $x=- P\tau \leq 0$. Then we write

\begin{align} \label{vup}
&\left\| V^{\eps} U_0(\tau) \phi_{-P} \right\|^2 \nonumber\\
&= \!\int_{-\infty}^{a/2} \!\! dx\, V\Big( \frac{x-a}{\eps} \Big)^2 | U_0(\tau) \phi_{-P}(x) |^2 + \int_{a/2}^{+\infty} \!\! dx\, V\Big( \frac{x-a}{\eps} \Big)^2 | U_0(\tau) \phi_{-P}(x) |^2\nonumber\\
& \leq \|U_0(\tau) \phi_{-P} \|_{L^{\infty} (\R) }^2 \int_{-\infty}^{a/2} \!\! dx\, V\Big( \frac{x-a}{\eps} \Big)^2
+ \|V\|_{L^{\infty} (\R) }^2 \int_{a/2}^{+\infty} \!\! dx\,  | U_0(\tau) \phi_{-P}(x) |^2 .
\end{align}

\noindent
Let us recall that 

\begin{align}
U_0(\tau) \phi_{-P} (x) &= \frac{1}{\sqrt{ 2 \pi i \eps \tau}} \int \!\!dy\, e^{i \frac{ (x-y)^2}{ 2 \eps^2 \tau } } \, \phi_{-P} (y) \nonumber\\
&= e^{ -i \frac{ P^2}{ 2 \eps^2 } \tau   -i \frac{ xP}{ \eps^2}  } \, U_0 (\tau) f\! \left( \frac{x +P \tau}{ \eps} \right) \nonumber\\
&=e^{ -i \frac{ P^2}{ 2 \eps^2 } \tau   -i \frac{ xP}{ \eps^2}  } \, \frac{1}{\sqrt{2\pi} } \int\!\!dk\, e^{i \frac{ x+P\tau}{\eps} 
 \,k } \, e^{-ik^2 \tau} \, \hat{f} (k) \,.
\end{align}

\noindent
Hence, from \eqref{vup} we have 

\begin{align} \label{vup1}
&\left\| V^{\eps} U_0(\tau) \phi_{-P} \right\|^2 \nonumber\\
&\leq \frac{ \|\hat{f}\|_{L^1(\R)}^2 }{ 2 \pi  } \int_{-\infty}^{a/2} \!\! dx\, V\Big( \frac{x-a}{\eps} \Big)^2 
+ \|V\|_{L^{\infty} (\R) }^2 \int_{a/2}^{+\infty} \!\! dx\,  \left| U_0(\tau) f \left( \frac{x +P \tau}{ \eps} \right) \right|^2 \nonumber\\
&= \frac{ \|\hat{f}\|_{L^1(\R)}^2 }{ 2 \pi  } \, \eps \!
\int_{-\infty}^{- \eps^{-1} a/2} \!\! dy\, V(y)^2 
+  \|V\|_{L^{\infty} (\R) }^2 \, \eps \!\int_{ \eps^{-1} ( P \tau + a/2) }^{+\infty} \!\! dy\,  \left| U_0(\tau) f(y) \right|^2\,. 
\end{align}

\noindent
It remains to estimate the two integrals in the last line of \eqref{vup1}. For the first one we have

\begin{align}\label{vup2}
\int_{-\infty}^{- \eps^{-1} a/2} \!\! dy\, V(y)^2 &\leq \eps^{ 2n-1 } \left( \frac{ 2 }{ a } \right)^{\! 2n-1} \!\!
\int_{-\infty}^{- \eps^{-1} a/2} \!\! \!dy\, |y|^{2n-1} V(y)^2 \nonumber\\
&\leq 
\eps^{ 2n-1 } \left( \frac{ 2 }{ a } \right)^{\! 2n-1} \!\! \|  V \|^2_{L^2_{2n-1} (\R)} 
\end{align}
for any $n \in \N$, where $\| \cdot \|_{L^2_{n} (\R)}$ denotes the  weighted $L^2$-norm defined in \eqref{n2}. 

\noindent 
For the second   second integral in \eqref{vup1}, a repeated integration by parts  yields

\begin{align}\label{vup3}
&\int_{ \eps^{-1} ( P \tau + a/2) }^{+\infty} \!\! dy\,  \left| U_0(\tau) f(y) \right|^2 = \frac{1}{2 \pi} \int_{ \eps^{-1} ( P \tau + a/2) }^{+\infty} \!\! dy
\left| \int \!\!dk\, e^{iyk} e^{-i k^2 \tau} \hat{f}(k) \right|^2 \nonumber\\
& = \frac{1}{2 \pi} \int_{ \eps^{-1} ( P \tau + a/2) }^{+\infty} \!\! dy\frac{1}{y^{2n} } 
\left| \int \!\!dk\, \left( \frac{d^n}{dk^n}  e^{iyk} \right) e^{-i k^2 \tau} \hat{f}(k) \right|^2 \nonumber\\
& = \frac{1}{2 \pi} \int_{ \eps^{-1} ( P \tau + a/2) }^{+\infty} \!\! dy \, \frac{1}{y^{2n} } 
\left| \int \!\!dk\,   e^{iyk}   \frac{d^n}{dk^n} \left( e^{-i k^2 \tau} \hat{f}(k) \right) \right|^2 \nonumber\\
&\leq \frac{1}{2 \pi} \left( \int \!\!dk\,     \left| \frac{d^n}{dk^n} \left( e^{-i k^2 \tau} \hat{f}(k) \right)  \right| \right)^{\!2} \int_{ \eps^{-1} ( P \tau + a/2) }^{+\infty} \!\! dy\, \frac{1}{y^{2n} }  \nonumber\\
&= \eps^{2n-1} \left( \frac{2}{ a + 2P\tau } \right)^{\! 2n-1}  \!\! \frac{1}{2\pi(2n-1)} 
\left( \int \!\!dk\,     \left| \frac{d^n}{dk^n} \left( e^{-i k^2 \tau} \hat{f}(k) \right)  \right| \right)^{\!2} .
\end{align}

\noindent
Let us notice that

\begin{align}\label{vup4}
& \left| \frac{d^n}{dk^n} \left( e^{-i k^2 \tau} \hat{f}(k) \right)  \right| \leq 
\sum_{m=0}^n 
\left(\!\! \begin{array}{c} n  \\m  \end{array} \! \! \right)
\left| \frac{d^n}{dk^n}  e^{-i k^2 \tau}   \right|
\left| \hat{f}^{(n-m)} (k) \right| \nonumber\\
&= \sum_{m=0}^n 
\left(\!\! \begin{array}{c} n  \\m  \end{array} \! \! \right) 
\tau^{m/2} 
\left|  \left( \frac{ d^m}{dq^m} e^{-iq^2} \right)_{\!\! q=k \sqrt{\tau} }\right|
\left| \hat{f}^{(n-m)} (k) \right| \nonumber\\
&\leq \sum_{m=0}^n 
\left(\!\! \begin{array}{c} n  \\m  \end{array} \! \! \right) 
\tau^{m/2} 
c_m ( 1 + \tau^{m/2} |k|^m ) 
\left| \hat{f}^{(n-m)} (k) \right| \nonumber\\
&\leq \sum_{m=0}^n 
\left(\!\! \begin{array}{c} n  \\m  \end{array} \! \! \right) 
c_m (1 + \tau^m) ( 1 +  |k|^m ) 
\left| \hat{f}^{(n-m)} (k) \right| \nonumber\\
&\leq c_n (1 + \tau^n) \sum_{m=0}^n \left( 1  + |k|^m \right) 
\left| \hat{f}^{(n-m)} (k) \right|
\end{align}
where $c_l$ denotes a numerical constant depending on $l \in \N$. Using \eqref{vup3} in \eqref{vup4}, we find

\begin{align}\label{vup5}
&\int_{ \eps^{-1} ( P \tau + a/2) }^{+\infty} \!\! dy\,  \left| U_0(\tau) f(y) \right|^2 \leq
\eps^{2n-1} \frac{ (1 + \tau^n )^2}{( a + 2 P \tau )^{2n-1} } \, c_n \, \|\hat{f} \|^2_{W^{1,n}_n (\R) }
\end{align}
where $\| \cdot \|_{W^{1,n}_n (\R) }$ denotes the weighted Sobolev norm defined in \eqref{n3}. 

\noindent
Taking into account \eqref{Lt1}, \eqref{vup1}, \eqref{vup2}, \eqref{vup5}  we obtain

\begin{align}
&\| \mathcal L (t) \|_{\mathcal K} \leq 
\eps^n \, c_n \left( 
t \, a^{-n + 1/2} \|\hat{f}\|_{L^1 (\R)} \| V\|_{L^2_{2n-1} (\R) } 
+ \gamma  (t) \, \|V\|_{L^{\infty} (\R) } \, \|\hat{f} \|_{W^{1,n}_n (\R)} \right) 
\end{align}
where 
\begin{align}
&\gamma(t) = \int_0^t \!\!d \tau\, \frac{ 1 + \tau^n}{(a + 2 P \tau)^{n-1/2} }
\end{align}
is a continuous function satisfying
\begin{align}
&\lim_{t \to 0} \frac{\gamma(t)}{t} = a^{-n + 1/2} \,, \;\;\;\;\;\; 
\lim_{t \to \infty } \frac{\gamma(t)}{ t^{3/2} } = \frac{2^{3/2 -n} }{3} P^{-n + 1/2}\,. 
\end{align}

\noindent
This concludes the proof of the proposition.

\end{proof} 

\vs


\section*{Appendix A}
Here, we briefly recall the original EPR argument \cite{EPR}.
The argument relies  on the following two fundamental assumptions:

\n
- reality criterion (RC):  if, without in any way disturbing a system, we can predict with certainty the value of a physical quantity, then there exists an element of reality corresponding to the physical quantity;

\n
- locality principle (LP): the elements of physical reality of a system localized in a given place cannot be instantaneously influenced by a physical process involving a system, at any distance, that does not interact with the first one.

\n
It is worth noticing that, while RC is explicitly stated by the authors, LP is not, since it is considered  as obvious.
\n
Under these assumptions, the goal of the argument is to prove that Quantum Mechanics is not a complete theory, i.e., there are elements of  reality without a counterpart in the theory. 
\n
They consider a simple model made of two quantum particles in dimension one (set $\hbar=1$ for simplicity) and introduce the corresponding position and momentum observables

$$
P^{(j)} f(x_j) =  \f{1}{i} \f{\partial}{\partial x_j} f(x_j) \,, \;\;\;\;\;\;  X^{(j)} f(x_j) = x_j f(x_j)\,, \;\;\;\;  
\;\;\;\;\;\; j=1,2.
$$


\n
Neglecting some mathematical inconsistencies, they assume that the system is described at some instant of time by the entangled (i.e., non factorised) state

$$
\Psi (x_1,x_2)= \int\!\! dp\, u_p(x_1) \psi_{-p}(x_2) =: \int\!\! dp\, e^{i p x_1} \, e^{-ip(x_2 - x_0)} 
$$

\n
where $x_0$ is an arbitrary fixed point. Notice that $
u_p(x_1)= e^{i p x_1} $  is \say{eigenvector} of $P^{(1)}$ with \say{eigenvalue} $p$ and $ \psi_{-p}(x_2) = e^{-i p (x_2-x_0)}$ is  \say{eigenvector} of $P^{(2)}$ with \say{eigenvalue} $-p$. Moreover, using the identity $\delta(x)= (2 \pi)^{-1} \int \!dp\, e^{-ipx}$, we have 
$$
\Psi(x_1,x_2)= 2 \pi \,\delta(x_2-x_1-x_0) \;\;\;\;\;\; \text{and}\;\;\;\;\;\; \widehat{\Psi} (p_1,p_2) = 2 \pi \,\delta(p_1+p_2).
$$
Hence the state $\Psi$ describes the physical situation where the two particles are at (a possibly large) distance $x_0$, with opposite momentum. Furthermore, the state $\Psi$ can also be represented in the form 
$$
\Psi(x_1,x_2)= 2\pi \!\! \int\!\! dx\, v_x(x_1) \varphi_{x+x_0} (x_2)=: 2 \pi \!\! \int\!\!dx\, \delta (x_1-x) \, \delta (x_2 -x -x_0)
$$
where $  v_x(x_1)= \delta (x_1-x)$ is \say{eigenvector} of $X^{(1)}$ with \say{eigenvalue} $x$ and $ \varphi_{x+x_0} (x_2) = \delta (x_2-x-x_0)$ is \say{eigenvector} of $X^{(2)}$ with \say{eigenvalue} $x+x_0$.

\n
Now the proof proceeds as follows. 
Let us perform a measurement of the momentum $P^{(1)}$ of the first particle and let $\bar{p}$ be the result. By the wave packet collapse, after the measurement the state of the system is 
$$
u_{\bar{p}}(x_1) \psi_{-\bar{p}} (x_2) 
$$
Hence, after the measurement on the first particle,  we can predict with certainty that the second particle has momentum $-\bar{p}$. By RC this means that there is an element of reality corresponding to the observable $P^{(2)}$ of the second particle. By LC this element of reality cannot have been created by the measurement on the first particle and therefore it existed from the beginning. 

\n
Proceeding in a similar way, starting from the same state $\Psi$ one measures  the position of the first particle and let $\bar{x}$ be the result. By the wave packet collapse, after the measurement the state of the system is 

$$
v_{\bar{x} }(x_1) \varphi_{\bar{x}+x_0} (x_2)
$$

\n
Hence, after the measurement on the first particle,  we can predict with certainty that the second particle has position $\bar{x} +x_0$ and so there exists an element of reality corresponding to the position of the second particle.

\n
Thus, given the state $\Psi$ of the system  and performing measurements on the first particle, without in any way disturbing the second particle one can attribute elements of reality corresponding to both momentum and position of the second particle. 

\n
On the other hand, in Quantum Mechanics the uncertainty principle implies that position and momentum cannot be both predicted with certainty. 
Therefore, there are elements of reality that do not have counterparts in the theory. 
It follows that Quantum Mechanics is not complete.

\vs

\section*{Appendix B}

Here we derive formula \eqref{I(t)}. From \eqref{I(t)0} and \eqref{fxk} we have

\begin{align}
&\big( I(t) f_{X,K} \big)(x)= \frac{1}{ 2 \pi \, \eps^{5/2} } \int_0^{t} \!\! d\tau\, \frac{e^{i \frac{\tau}{\eps}}}{\tau} 
\int\!\!dy\, e^{- i \frac{(x-y)^2}{2 \eps^2 \tau}} V\!\left( \!\frac{y-a}{\eps} \!\right) 
\int\!\!dz\, e^{i \frac{(y-z)^2}{2 \eps^2 \tau}}  f\! \left( \frac{z}{\eps} -X\right)  e^{i \frac{K}{\eps^2} z} \nonumber
\end{align}
We write
$$
V\!\left( \!\frac{y-a}{\eps} \!\right) = \frac{1}{\sqrt{2 \pi}} \int\!\! d\xi\, \hat{V} (\xi) \, e^{i \frac{y-a}{\eps} \xi} 
$$
and then

\begin{align}
&\big( I(t) f_{X,K} \big)(x)
= \frac{1}{ (2 \pi)^{3/2}  \, \eps^{5/2} } \int_0^{t} \!\! d\tau\, \frac{e^{i \frac{\tau}{\eps}}}{\tau} \!
\int\!\!dy\, e^{- i \frac{(x-y)^2}{2 \eps^2 \tau}} \!
\int\!\! d\xi\, \hat{V} (\xi) \, e^{i \frac{y-a}{\eps} \xi} \!
\int\!\!dz\, e^{i \frac{(y-z)^2}{2 \eps^2 \tau}}  f\! \left( \frac{z}{\eps} -X\right)  e^{i \frac{K}{\eps^2} z} \nonumber\\
&= \frac{e^{-i \frac{x^2}{2 \eps^2 \tau} } }{ (2 \pi)^{3/2}  \, \eps^{5/2} } \int_0^{t} \!\! d\tau\, \frac{e^{i \frac{\tau}{\eps}}}{\tau} \!
\int\!\!dy\, e^{i \frac{xy}{\eps^2 \tau}} 
\!
\int\!\! d\xi\, \hat{V} (\xi) \, e^{i \frac{y-a}{\eps} \xi} \!
\int\!\!dz\,  f\! \left( \frac{z}{\eps} -X\right)  e^{ i \frac{z^2}{2 \eps^2 \tau} - i \frac{yz}{\eps^2 \tau} +   i \frac{K}{\eps^2} z} \nonumber\\
&= \frac{e^{-i \frac{x^2}{2 \eps^2 \tau} } }{ (2 \pi)^{3/2}  \, \sqrt{\eps} } 
\int_0^{t} \!\! d\tau\, e^{i \frac{\tau}{\eps}} \int\!\!dv\, e^{i x v} 
\int\!\! d\xi\, \hat{V} (\xi) \, e^{- i \frac{a}{\eps} \xi    + i \eps \tau v \xi} \!
\int\!\!dz\,  f\! \left( \frac{z}{\eps} -X\right)  e^{ i \frac{z^2}{2 \eps^2 \tau} - ivz +   i \frac{K}{\eps^2} z} \nonumber
\end{align}
Now we exchange the order of integration and by a formal delta-function computation we find
\begin{align}
&\big( I(t) f_{X,K} \big)(x)
= \frac{e^{-i \frac{x^2}{2 \eps^2 \tau} } }{ (2 \pi)^{3/2}  \, \sqrt{\eps} } 
\int_0^{t} \!\! d\tau\, e^{i \frac{\tau}{\eps}}\!\!
\int\!\! d\xi\, \hat{V} (\xi) \, e^{- i \frac{a}{\eps} \xi   } \!\!
\int\!\!dz\,  f\! \left( \frac{z}{\eps} -X\right)  e^{ i \frac{z^2}{2 \eps^2 \tau} +   i \frac{K}{\eps^2} z} \!\!
\int\!\!dv\, e^{-i v \big( z-x-\eps \tau \xi \big)}  \nonumber\\
&= \frac{e^{-i \frac{x^2}{2 \eps^2 \tau} } }{  \sqrt{2 \pi \eps} } 
\int_0^{t} \!\! d\tau\, e^{i \frac{\tau}{\eps}}\!\!
\int\!\! d\xi\, \hat{V} (\xi) \, e^{- i \frac{a}{\eps} \xi   } \!\!
\int\!\!dz\,  f\! \left( \frac{z}{\eps} -X\right)  e^{ i \frac{z^2}{2 \eps^2 \tau} +   i \frac{K}{\eps^2} z} \,
\delta \big( z-x-\eps \tau \xi \big) \nonumber\\
&= \frac{e^{i \frac{K}{ \eps^2 } x } }{  \sqrt{2 \pi \eps} } 
\int_0^{t} \!\! d\tau\!\!
\int\!\! d\xi\, \hat{V} (\xi) \,  f\! \left( \frac{x}{\eps} + \tau \xi -X\right) 
e^{i \left( \frac{\tau \xi^2}{2} + \frac{x}{\eps} \xi \right)} \,
e^{\frac{i}{\eps} \left( \tau - a \xi + K \tau \xi \right) } \nonumber
\end{align}
which coincides with \eqref{I(t)}. Notice that the above formal computation can  easily be made rigorous by a standard regularization procedure of the integral in the variable $v$.

\vs

\n
{\bf Acknowledgements.} 
This work has been partially financed by European Union - Next Generation EU, Projects MUR-PRIN 2022, reference n. 20225ATSP, 2022TMW2PY, 2022CHELC7.
L.B. and A.T.   acknowledge the support of the GNFM  - INdAM., and R.A. acknowledges the support of the GNAMPA  - INdAM project \say{Analisi spettrale, armonica e stocastica in presenza di potenziali magnetici} in the framework \say{Progetti di ricerca 2025}.

\end{document}